\newcommand{\tr}{\operatorname{tr}}
\def\idty{{\leavevmode\rm 1\mkern -5.4mu I}} 
\def\ket #1{\vert #1\rangle}
\def\ketbra #1#2{\vert #1\rangle \langle #2\vert}
\def\kettbra#1{\ketbra{#1}{#1}}
\def\tr{\mathop{\rm tr}\nolimits}
\newcommand*{\cH}{\mathcal{H}}
\newcommand*{\cO}{\mathcal{O}}
\newcommand*{\cP}{\mathcal{P}}
\newcommand*{\cS}{\mathcal{S}}
\newcommand*{\cT}{\mathcal{T}}
\newcommand*{\cX}{\mathcal{X}}
\newcommand\pr[1]{\ensuremath{\mathrm{Pr}[#1]}}
\newtheorem{lem}{Lemma}
\newtheorem{thm}{Theorem}
\newcommand{\pass}{\textnormal{pass}}
\newcommand{\fail}{\textnormal{fail}}
\newcommand{\tot}{\textnormal{tot}}
\newcommand{\PE}{\textnormal{PE}}
\newcommand{\se}{\textnormal{sec}}
\newcommand{\IR}{\textnormal{IR}}
\newcommand{\Key}{\textnormal{key}}
\newcommand{\ex}{\textnormal{ex}}
\newcommand{\sq}{\textnormal{sq}}
\newcommand{\asq}{\textnormal{asq}}
\newcommand{\vac}{\textnormal{vac}}
\newcommand{\dd}{\textnormal{d}}
\newcommand{\co}{\textnormal{c}}
\newcommand{\diag}{\textnormal{diag}}
\def\cH{{\mathcal H}}
\def\cS{{\mathcal S}}
\newcommand{\ppass}{p_{\textnormal{pass}}}
\definecolor{myred}{rgb}{1,0,0}
\definecolor{myblue}{rgb}{0,0,0.8}
\definecolor{myyellow}{rgb}{0.9,0.8,0}
\definecolor{mygreen}{rgb}{0,0.6,0}
\definecolor{myorange}{rgb}{0.6,0.6,0}
\definecolor{mycerul}{rgb}{0,0.6,1}
\begin{document}

\title{Reverse Reconciliation Continuous Variable Quantum Key Distribution Based on the Uncertainty Principle}

\author{Fabian Furrer}
\affiliation{Department of Physics, Graduate School of Science,
University of Tokyo, 7-3-1 Hongo, Bunkyo-ku, Tokyo, Japan, 113-0033.}
\email[]{furrer@eve.phys.s.u-tokyo.ac.jp}

\begin{abstract}
A big challenge in continuous variable quantum key distribution is to prove security against arbitrary coherent attacks including realistic assumptions such as finite-size effects. Recently, such a proof has been presented in [Phys. Rev. Lett. 109, 100502 (2012)] for a two-mode squeezed state protocol based on a novel uncertainty relation with quantum memories. But the transmission distances were fairly limited due to a direct reconciliation protocol. We prove here security against coherent attacks of a reverse reconciliation protocol under similar assumptions but allowing distances of over $16$~km for experimentally feasible parameters. We further clarify the limitations when using the uncertainty relation with quantum memories in security proofs of continuous variable quantum key distribution. 
\end{abstract}


\maketitle

\section{Introduction}
The most advanced quantum information technology is quantum key distribution (QKD), which is the art of using quantum properties to distribute a secure key between two remote parties. Its challenge lies in the combination of state of the art experimental implementations and newly developed quantum information theoretic principles to ensure its security. There exist two different implementations both of which have different benefits. More established is the encoding of the information in a quantum system with discrete degrees of freedoms, as, e.g., the polarization of a photon. Such discrete variable protocols are usually based on single photon sources and detectors with the latter suffer from low efficiency at room temperature and being susceptible to loopholes (see, e.g.,~\cite{Makarov11,Eisaman2011}). The advantage of such protocols is that conditioned on the arrival of a single photon, the channel noise is generally weaker allowing for long distances. 

An alternative implementation encodes the information into the quadratures of the electromagnetic field (see the recent review~\cite{Weedbrook2011} and references therein). Since the quadratures have a continuous spectrum they are called continuous variable QKD protocols. Compared to discrete variable protocols, they are based on variants of homodyne detection which is a robust and efficient measurement technique already used in current telecommunication systems. Although CV QKD systems are secure against blinding attacks, they are particularly vulnerable to manipulations of the phase reference signal (local oscillater) (see, e.g., ~\cite{Haeseler2008,jouguet2013b}). Since the information is directly encoded in the phase and amplitude of the laser beam, the fiber losses severely damp the transmitted signal and with that the encoded information.  Nevertheless, it was shown in~\cite{Grosshans03}  that a key can be generated for arbitrary losses using reverse reconciliation protocols. This has recently also been experimentally demonstrated against restricted attacks~\cite{jouguet2012}. 

Up to recently, the security of continuous variable QKD protocols has only been analyzed in the asymptotic limit assuming an infinite number of communication rounds (see, e.g.,~\cite{Grosshans03,Weedbrook04,Leverrier2011}). For protocols based on a Gaussian phase and amplitude modulation this simplifies the security analysis tremendously. For instance, so-called collective attacks in which each signal is attacked independently and identically are as powerful as general (coherent) attacks~\cite{Renner_Cirac_09}. Moreover, it has been shown that Gaussian collective attacks are optimal among all collective attacks~\cite{Cerf2006,Navascues2006}. But these powerful results can no longer be applied if finite-size effects due to only a finite number of communication rounds are considered. And furthermore, even under a restricted set of collective Gaussian attacks a significantly lower key rate is obtained for feasible block lengths~\cite{Leverrier2010}.

A big challenge in continuous variable QKD is to prove security against coherent attacks including all finite-size effects. Since the Hilbert space of the system is infinite-dimensional certain techniques that are standard for discrete variable security proofs cannot be applied. For instance, the exponential quantum de-Finetti theorem~\cite{renner_nature} or the post-selection technique~\cite{Renner_Postselection} that are used to lift security against collective attacks to security against coherent attacks do not directly apply in infinite dimensions (c.f.~\cite{Renner_Cirac_09}). Recently, the post-selection technique has been extended in order to apply it to continuous variable QKD~\cite{leverrier13}, but its practical implementation relies on a cumbersome symmetry step which is unpractical for real life applications. 

Another promising approach has been presented in~\cite{Furrer12} which is based on a newly extended uncertainty relation including the effect of entangled observers~\cite{berta09,Berta13}. The corresponding protocol is based on the distribution of entangled two-mode squeezed states and homodyne detection implemented in~\cite{Eberle2013}. The uncertainty relation allows to bound the information of an eavesdropper Eve solely by the correlation strength between the honest parties Alice and Bob. It has thus the advantage that no tomography, or equivalently, quantum channel estimations are necessary with the consequence of not relying on collective attacks. But in~\cite{Furrer12,Furrer12E} only losses up to $20$\% could be tolerated since a direct reconciliation protocol has been used. Moreover, the potential and limitations of the proof technique have not been fully investigated. 

Here, we show that using a reverse reconciliation protocol significantly higher losses of over $50$\% can be tolerated enabling transmission distances of over $16$ km including finite-size effects. This makes the protocol suitable for practical short distance continuous variable QKD providing security against coherent attacks. The security proof has the advantage that it does not require any assumptions on Alice's measurement device and is thus one-sided device independent.  

Compared to~\cite{Furrer12}, the reverse reconciliation protocol requires Bob to apply a test to control the energy of the incoming signal. The test is based on a beam splitter to reflect a negligible part of the signal which is then measured with a heterodyne detector. We then show that conditioned that the outcomes of the heterodyne detector are sufficiently small the probability of Eve using a large energy attack can be neglected. This test further allows one to overcome the problem that homodyne detectors only operate faithfully in a limited detection range. Moreover, we provide a new statistical estimation procedure that enables us to deal with high energy signals which was not possible in~\cite{Furrer12}.

We also clarify the theoretical limitations of the proof technique based on the extended uncertainty relation. In particular, we provide the optimal key rate in the asymptotic limit of an infinite number of exchanged signals and without statistical uncertainty. Unfortunately, it turns out that even under these ideal conditions the tolerated losses are limited. An investigation of the asymptotic key rate for a broad range of continuous variable protocols based on the uncertainty relation has recently been given in~\cite{Walk14}.

The paper is organized as follows. We start in Section~\ref{sec:KeyRate} by introducing the security definitions and the classical part of the protocol. This enables us to give a general formula for the key rate presented in~\eqref{eq:KeyLength}. In Section~\ref{sec:Setup}, we discuss the experimental setup and how the raw key is formed. The different steps of the protocol are then listed in Section~\ref{sec:Protocol} together with the assumptions. The main result is Theorem~\ref{thm:KeyLength} which gives the explicit formula for the key length. In Section~\ref{sec:KeyRates}, we present plots of the key rates for experimentally feasible parameters. The security analysis is given in Section~\ref{sec:SecAnalysis}. The tightness of the security proof is analyzed in Section~\ref{sec:Tightness}. Eventually, we conclude our results in Section~\ref{sec:Conclusion}.

\section{Security of a QKD Protocol and Finite-Key Rate} \label{sec:KeyRate}

\subsection{Security Definitions} 
A generic QKD protocol consists of two phases. The first phase is given by the quantum part and includes the transmission and measurement of the quantum system. The second phase is purely classical and consists of the extraction of a secure key from the measured data by means of classical post-processing. In the following, we consider an entanglement based scenario in which the source is trusted and located in Alice's laboratory. She then sends one part of the quantum system through a quantum channel to Bob. It is always understood that Alice's and Bob's laboratory's are closed, that is, no unwanted information can leak to an eavesdropper. Once all quantum systems are distributed, Alice and Bob perform measurements to obtain the data from which the raw keys $X_A$ and $X_B$ are formed. At the same time a parameter estimation test is done which concludes whether one proceeds with the key extraction or one aborts the protocol. 
Since the key generation is a statistical process, one can assign a probability $p_\pass$ to the event that the parameter test is passed. 

Given that the parameter estimation test is passed, Alice and Bob proceed with the classical post processing to generate the final keys $S_A$ and $S_B$, respectively.
Here, $S_A$ and $S_B$ are classical random variables which might be correlated with a quantum system $E$ hold by an eavesdropper. We denote the associated classical-quantum state by $\rho_{S_AS_BE}$. The state $\rho_{S_AS_BE}$ can conveniently be written as a classical quantum state
\begin{equation}\label{eq:KeyState}
\rho_{S_AS_BE} = \sum_{s_A,s_B} p(s_A,s_B) \kettbra{s_A,s_B} \otimes \rho_E^{s_A,s_B} \, ,
\end{equation}
where the classical values for the keys $s_A$ and $s_B$ are associated with orthonormal states $\ket{s_A,s_B}$ in a Hilbert space. Here, $p(s_A,s_B) $ denotes the distribution of keys and $\rho_E^{s_A,s_B} $ the quantum state of the eavesdropper conditioned on $S_A=s_A$ and $S_B=s_B$.  

We characterize a quantum key distribution protocol by its correctness and secrecy. For that we use a notion of security which is composable and based on the approach developed in~\cite{renner:04n,BenOr05,Renner_ComposableSecurity}. A protocol is called $\epsilon_c$\emph{-correct }if the probability that $S_A$ is not equal to $S_B$ is smaller than $\epsilon_c$: 
\begin{equation}\label{eq:correctness}
\text{Pr}[S_A \neq S_B ] \leq \epsilon_c \, .
\end{equation} 
Roughly speaking, a protocol is secret if the key $S_B$ is almost uniformly distributed and completely uncorrelated to Eve's system $E$. The ideal state is thus given by $\text{u}_{S_B} \otimes \rho_E$, where $\text{u}_{S_B}$ denotes the uniform distribution over all keys and $\rho_E$ is the reduction of the state~\eqref{eq:KeyState} to system $E$. We then say that a protocol is $\epsilon_s$\emph{-secret} if 
\begin{equation}
 (1-p_\pass) \, \Vert \rho_{S_BE} - \text{u}_{S_B} \otimes \rho_E \Vert_1   \leq \epsilon_s\,  ,
\end{equation} 
where $\Vert \cdot \Vert_1$ denotes the trace norm and the infimum is taken over all possible states of Eve's system. Eventually, a protocol is called $\epsilon_\se$\emph{-secure} if it is $\epsilon_c$-correct and $\epsilon_s$-secret with $\epsilon_c+\epsilon_s\leq \epsilon_\se$. Note that the above security definition is composable in the sense that security is guaranteed if any part of the key is used for any other cryptographic protocol. This follows from the monotonicity of the trace distance. 
 
\subsection{Classical Post-Processing}\label{sec:ClPostPro}
As discussed in the previous section, the classical post-processing transforms the raw keys $X_A$ and $X_B$ into the final keys $S_A$ and $S_B$. In the first step of this post-processing an information reconciliation protocol is applied to diminish the discrepancy of $X_A$ and $X_B$. It was shown in~\cite{Grosshans03} that it is beneficial for continuous variable QKD protocols to use a reverse reconciliation scheme in which Alice corrects her raw key $X_A$ in order to match $X_B$. This is especially crucial for long distance QKD. Throughout this paper, we assume that a one-way reverse reconciliation protocol is used in which $\ell_{\IR}$ bits of information about $X_B$ is sent to Alice via an authenticated classical channel. Given this information and $X_A$, Alice outputs a guess $ X^{\co}_A$ of $X_B$. 

In order to ensure correctness~\eqref{eq:correctness} for the raw keys $X^{\co}_A$ and $X_B$ (and thus for the keys), Bob applies a random function of a family of two-universal hash functions~\cite{Carter79,Wegman81} onto an alphabet of size $ 1/\epsilon_c$ on $X_B$. He then sends Alice over an authenticated public channel a description of the applied function together with the obtained value. This leaks additional $\log1/\epsilon_c$ bits of information, where the logarithm is always taken to base $2$. Alice applies the function to her corrected raw key $ X^{\co}_A$ and checks if the obtained value matches with the one from Bob. If this is the case, they proceed with the protocol otherwise they abort~\footnote{Note that in a practical situation one does not need to abort the protocol and may only supply more information in the reconciliation protocol until the test is passed.}.  This then ensures that the generated key is $\epsilon_c$ correct. 

In a second step of the classical post-processing the raw key is hashed to a sufficiently small alphabet by means of a family of two-universal hash functions such that the key is $\epsilon_{\se}$-secure. Let us assume that the output of the hash functions is a bit string of length $\ell_{\epsilon_\se}$.
For finite-dimensional $E$ systems it has been shown in~\cite{RennerPhD,Tomamichel10} that the length of the bit string $\ell_{\epsilon_\se}$ can be expressed by the smooth min-entropy $H^\epsilon_{\min}(X_B|E)$ which is related to the maximal probability that Eve guesses $X_B$ correctly (see Section~\ref{sec:UR}). This result has been extended in~\cite{Furrer11} to the case where Eve's system E is modeled by an infinite-dimensional Hilbert space, which is necessary for applications to continuous variable systems. In particular, it holds that if $\epsilon \leq (\epsilon_s-\epsilon_1)/(2p_\pass)$, 
\begin{equation}\label{eq:KeyLength}
H_{\min}^\epsilon(X_B|E)_\rho-  \ell_{\IR} -  \log \frac{1}{\epsilon_1^2\epsilon_c}  +2 
\end{equation}
is a tight lower bound on the key length $\ell_{\epsilon_\se}$ with $\epsilon_{\se} = \epsilon_c + \epsilon_s$ (see, e.g.,~\cite{FurrerPhD} for details). The state $\rho_{X_B E}$ for which the smooth min-entropy is evaluated corresponds to the classical-quantum state describing the joint state of Bob's raw key $X_B$ and Eve's system $E$ conditioned that the protocol passes. 
The goal of the security analysis is to obtain a tight lower bound on~\eqref{eq:KeyLength} using the data collected in the parameter estimation step.


\section{The Protocol and Key Rates}\label{sec:ProtocolKeyRates} 


\subsection{Experimental Setup and Generation of Data} \label{sec:Setup}

The protocol is similar to the one in~\cite{Furrer12} and consists of the distribution of an entangled two-mode squeezed state and homodyne detection first proposed in~\cite{Cerf01}. But additionally, Bob performs a test in order to estimate whether the incoming signal exceeds a certain energy threshold. This test allows one to exclude high energy eavesdropping attacks and to restrict onto a bounded measurement range. This is crucial in order to do finite statistics with reliable error bounds. The test requires only two additional homodyne detectors. 

The source is assumed to be in Alice's laboratory and generates a two-mode squeezed entangled state often referred to as an EPR state. This can be implemented by mixing two squeezed modes over a balanced beam splitter~\cite{Furusawa98}. The important characteristic of a two-mode squeezed state is that there are two quadratures with a phase difference of $\pi/2$ for which the two modes are highly correlated. We call these quadratures amplitude $Q$ and phase $P$ in the following. Alice then keeps one mode in her laboratory and performs at random an amplitude or phase measurement using a homodyne detector, where the probability for phase is $0<r<1$. The other mode is sent through a fiber to Bob who is as well performing randomly an amplitude or phase measurement with probability $1-r$ and $r$, respectively.  Due to the property of a two-mode entangled state, Alice's and Bob's measurement outcomes are highly correlated if they both perform amplitude or phase measurement and uncorrelated otherwise.  

Before Bob measures amplitude or phase of the incoming signal he performs an energy test. In particular, he mixes the signal with a vacuum mode $a$ using a beam splitter with almost perfect transmittance $T$. The reflected signal $a'$ is measured via heterodyne detection, that is, mode $a'$ is mixed with another vacuum mode $b$ by a balanced  beam splitter and homodyne detection is performed to measure amplitude of one output $q_{t^1}$ (mode $t^1$) and phase $p_{t^2}$ of the other output (mode $t^2$). The setup is illustrated in Figure~\ref{fig:Energy}. Bob then simply checks whether $|q_{t^1}|$ and $|p_{t^2}|$ is smaller than a prefixed value $\alpha$ for every incoming signal and aborts otherwise. In the following we denote the corresponding test by $\cT(\alpha,T)$. In Section~\ref{sec:Test}, we show that conditioned that $\cT(\alpha,T)$ passes the probability for large amplitude and phase measurements can be bounded. 

\begin{figure}\begin{center}\includegraphics*[width=8.8cm]{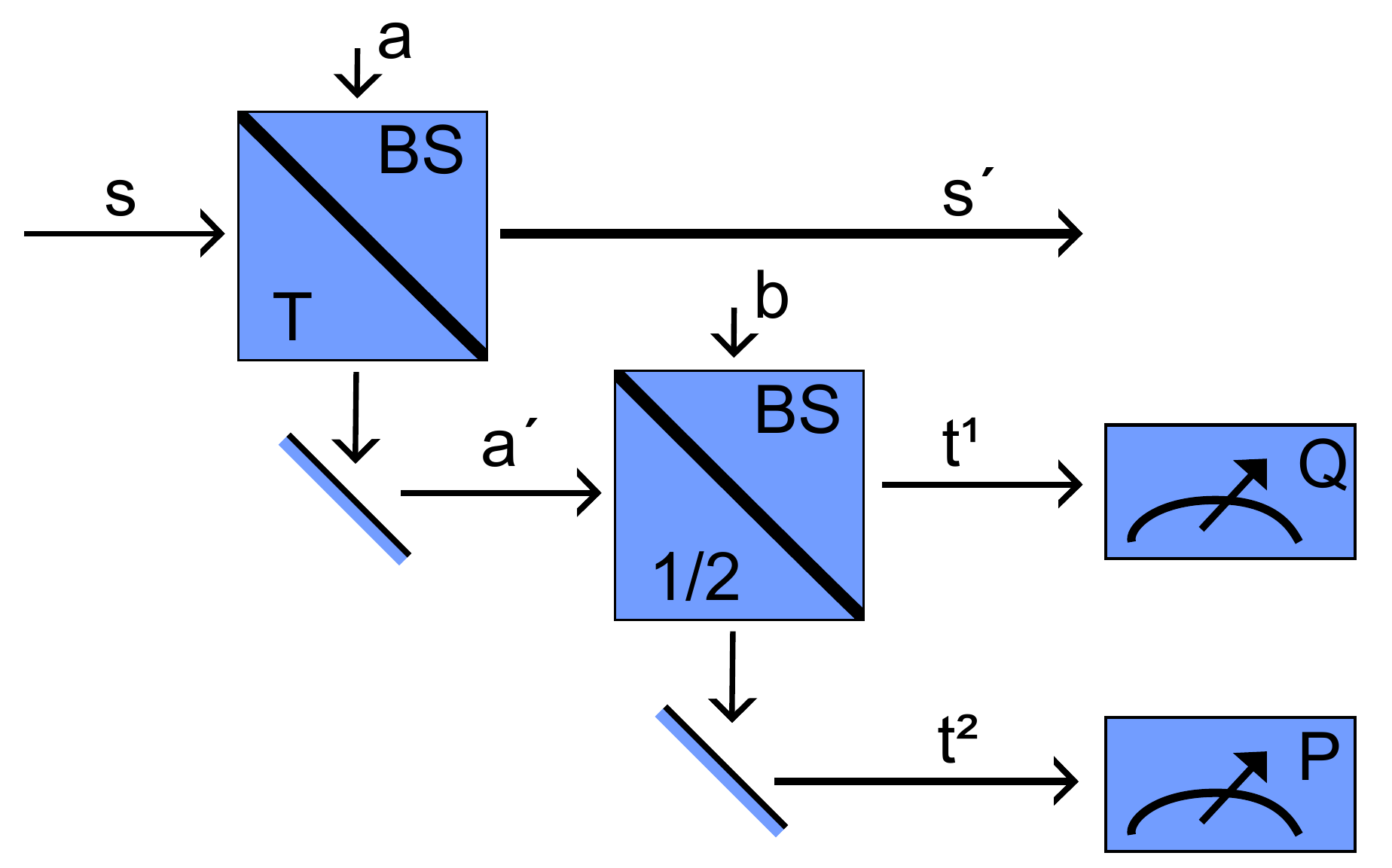}\caption{\label{fig:Energy} 
The diagram shows the measurement setup of Bob's test $\cT(\alpha,T)$. He mixes the incoming signal with a vacuum mode $a$ using a beam splitter with very low reflectivity $1-T$ and applies a heterodyne detection on the reflected signal $a'$. The test then consists of checking whether the absolute value of the outcome of the amplitude measurement of mode $t^1$ and the phase measurement on $t^2$ is smaller than $\alpha$.    
}\end{center}\end{figure}

While theoretically the spectrum of a homodyne measurement is the real line, any practical implementation is limited by a certain precision. We account for that by grouping outcomes into intervals of length $\delta$, where $\delta$ should be larger than the precision of the homodyne detector. We then choose an $M\geq 0$ smaller than the detector threshold and group the measurements into intervals
\begin{align*}
 I_1 & =(-\infty,-M+\delta] \, , \\
I_k &=(-M +(k-1)\delta,-M +k\delta]  , \  k=2,...,2M/\delta -1 \, , \\
I_{2M/\delta} & = (M-\delta,\infty) \, ,
\end{align*}
where we assume that $2M/\delta$ is in $\mathbb N$. We thus associate with any measurement result a value in $\cX=\{1,2,...,2M/\delta\}$. 

It is important for the protocol to have high correlations between Alice's and Bob's outcome in the index set $\cX$. But due to losses in the fiber, Bob's amplitude and phase quadratures $Q_B$ and $P_B$ will be damped. In order to account for that, we scale the quadrature measurements of Alice's detector $Q_A$ and $P_A$ before grouping them into the intervals using the transformations 
\begin{equation}
Q_A \mapsto \tilde Q_A = t_q Q_A \ \text{and} \  P_A\mapsto \tilde P_A = t_p P_A \, .
\end{equation}
The scaling factors $t_q$ and $t_p\leq 1$ are adjusted according to the channel losses in the transmission of the mode to Bob.

For the following, we also need a function to measure the strength of the correlations between two strings $X,Y\in \cX^N$. For that we introduce the average distance 
\begin{equation} \label{eq:Dist}
d(X,Y) = \frac 1 N \sum^N_{k=1} \vert X^k - Y^k\vert  \, . 
\end{equation}
We further define the average second moment of the difference between the strings by 
\begin{equation} \label{eq:DistVar}
d_2(X,Y) =  \frac 1 N \sum_{k=1}^N \vert X^k - Y^k\vert^2 \, .
\end{equation}
Moreover, we define average second moment for the discretized phase and amplitude measurements $X\in\cX^N$ by 
\begin{equation}\label{eq:Var}
\text{m}_2(X)=\frac 1N \sum_{k=1}^N (X^k-M/\delta)^2 \, .
\end{equation} 
Here, we subtract $M/\delta$ since in the absence of an eavesdropper the average value of $X$ will be (approximately) $M/\delta$ such that $\text{m}_2(X)$ simplifies to the variance. This holds because the first moments of the amplitude and phase measurements in the absence of Eve are $0$, which implies that the first moments of the discretized value will be approximately $M/\delta$.  
%


\subsection{The Protocol}\label{sec:Protocol}

The protocol depends on the total number of prepared two-mode squeezed states $N_\tot$, the probability that Alice and Bob perform a phase measurement $r$, the interval length for the data generation $\delta$, the threshold parameters $\alpha$ and $M$ (see Section~\ref{sec:Setup}), and a fixed value $d_0>0$ used in the parameter estimation test. All classical communication is assumed to be authenticated. The different steps in the protocol are as follows. 

\begin{enumerate}
	\item \textit{Distribution \& Measurement:} Alice prepares $N_\tot$ two-mode squeezed states and sends half of it to Bob upon which both measure for each mode phase with probability $r$ and amplitude with probability $(1-r)$. Moreover, Bob applies the test $\cT(\alpha,T)$, that is, he checks if $|q_{t^1}|,|p_{t^2}|\leq \alpha$ is satisfied for all of the $N_\tot$ incoming modes and aborts the protocol otherwise (see Figure~\ref{fig:Energy}). 
	
	\item \textit{Data Generation:} Alice and Bob publicly announce their basis choice. We count with $n$ and $k$ the number of events in which Alice and Bob both chose amplitude and phase measurement, respectively. From the measurement with the same basis choice, they use the amplitude and phase measurements to form $X_A$ and $X_B$ in $\cX^n$, and $Y_A^{\PE}$and $Y_B^{\PE}$ in $\cX^k$ according to Section~\ref{sec:Setup}. Alice and Bob further form a string containing all discretized phase measurements denoted by $Y_A^{P}$ and $Y_B^{P}$, respectively, where we assume that both have length $m$.

\item \textit{Parameter Estimation:} Using classical communication, they compute the distance $d^{\PE}=d(Y_A^{\PE},Y_B^{\PE})$ as in~\eqref{eq:Dist} and check if $d^{\PE} \leq d_0$. If this does not hold they abort the entire protocol. Otherwise, they proceed with the protocol and compute the second moment of the distance $V_d^\PE = d_2(Y_A^{\PE},Y_B^{\PE}) $ according to~\eqref{eq:DistVar}. Moreover, they individually compute the average second moments of all their phase measurements $V_{Y_A}^\PE=\text{m}_2(Y_A^P)$ and $V_{Y_B}^\PE=\text{m}_2(Y_B^P)$ according to~\eqref{eq:Var}.

\item \textit{Classical Post-Processing} They run a classical post-processing protocol as described in Section~\ref{sec:ClPostPro} by applying first a one-way reverse reconciliation protocol and secondly hash the corrected raw keys $X^{\co}_A$ and $X_B$ to final keys $S_A$ and $S_B$ of length $\ell$. 
\end{enumerate}

The crucial point is now to obtain a tight bound on the possible number of secure bits $\ell$ one can generate by the above protocol. Such a bound relies always on a set of assumptions. Such assumptions can, for instance, be a restriction on the attacks of the eavesdropper or simplifications used to model the experimental setup. We thus start, with a detailed description of our assumptions before presenting the key length formula. 

We always assume that Alice's and Bob's laboratory are secure and closed, that is, no unwanted information leaks from their laboratory. It is further very important to assume that all random numbers used for the basis choice and the classical post-processing are truly random and independent. This implies for instance that Alice's and Bob's basis choice are random and independent which is crucial for the security. 
While these assumptions are at the ground of most of the security analysis the following are specific for our measurement setup and security proof.   

\begin{enumerate} 
 \item[(A)] \textit{ Assumptions.} 
We assume that Bob's sequential measurement of the values in $\cX$ are independent and correspond to perfect amplitude and phase measurements of the intervals $I_k$ defined in Section~\ref{sec:Setup}. Hence, they can be modeled by integration of the spectrum of one-mode amplitude and phase operators with perfect phase difference of $\pi/2$~\footnote{The case of a small deviation from a phase difference of $\pi/2$ can easily be included.}. The same applies to Bob's test measurement performed in $\cT(\alpha,T)$.  
\end{enumerate}

We note that (A) includes the assumption that the local phase reference used by Bob is trusted. This can be practically justified by either monitoring the phase reference or generating it independently directly on Bob's side. For possible attacks on the local oscillator and countermeasures see, for instance,~\cite{jouguet2013b}. We emphasize that we do not make any assumptions on Eve's attacks and that there are no requirements on Alice's measurement device. The latter is sometimes referred to as one-sided device independent~\cite{tomamichel11}.

As we will discuss in details in Section~\ref{sec:SecAnalysis}, security will be inferred from the uncertainty principle with quantum memory for continuous variable systems~\cite{Berta13}. The principle says that Eve's information about the amplitude measurements is bounded by an overlap term of Bob's measurements expressed by
\begin{equation}\label{eq:overlap1}
c(\delta) \approx \delta ^2 /2\pi \, ,
\end{equation} 
and the uncertainty of Alice about Bob's phase measurement. The latter can be estimated using the distance $d_0$ and the function 
\begin{equation}\label{eq:gamma}
\gamma(t) = (t+\sqrt{1+t^2})\Big(\frac{t}{\sqrt{1+t^2}-1}\Big)^{t} \, .
\end{equation}

Moreover, we use the test $\cT(\alpha,T)$ to upper bound the probability that Bob measures an amplitude or phase quadrature larger than $M$ by  (see equation~\eqref{lem,eq:FailureProb}) 
\begin{equation}
  n \Gamma(M,T,\alpha) \propto  n \exp\big({{\ -\frac{(\mu M - \alpha)^2}{T(1+\lambda)/2}}}\big)\, 
\end{equation}
where $\mu=\sqrt{\frac{1-T}{2T}}$. Hence, the probability can be made sufficiently small by tuning the parameters $\alpha$, $T$, and $M$. Using large deviation bounds for the statistical estimation of the raw key sample we then obtain the following bound on the key length. 

\begin{thm}\label{thm:KeyLength}
Let us consider the above protocol with parameters $(N_\tot,r,\delta,M,\alpha,d_0)$ and assume that the conditions in (A) are satisfied. We further assume that the reconciliation protocol broadcasts $\ell_\IR$ bits of classical information and the correctness test is passed for two-universal hash functions onto an alphabet of size $1/\epsilon_c$. 
Then, if the protocol passes, an $\epsilon_{c}$-correct and $\epsilon_s$-secret key of length
\begin{equation}\label{thm,eq:KeyLengthCoherent}
 n [\log \frac{1}{c(\delta)}-\log \gamma(d_0 + \mu)] -  \ell_{\IR} -  \log \frac{1}{\epsilon_1^2\epsilon_c}  +2 ,
\end{equation}
can be extracted,
where 
\begin{equation}\label{eq:mu}
\mu= \sqrt{2\log \xi^{-1} } \frac{(n+k) \sigma_* }{k\sqrt{n}}       +  \frac{4 (M/\delta) \log \xi^{-1}}{3} \frac{n+k}{n k} \, ,
\end{equation}
with 
\begin{align}
\sigma_*^2 = & \ \frac kN (V_{d}^\PE - \frac kN (d^\PE)^2) \ + \frac kN \big( V_{Y_A}^\PE + V_{Y_B}^\PE + 2 \frac{\nu}{\delta^2} \big) \nonumber
 \\
& + 2 \frac kN \sqrt{(V_{Y_A}^\PE +\frac \nu{\delta^2}) (V_{Y_B}^\PE +\frac \nu{\delta^2})} \,  , \label{eq:Sigma}
\end{align}
for the smallest $\nu$ for which  
\begin{align}\label{eq:xi}
\xi =& \, \Big(\epsilon_s  - \epsilon_1 - 2\sqrt{2 n \ \Gamma(M,T,\alpha) }\Big)^2    \\
&  \, - 2\exp\Big(-2(\nu/M)^2\frac{n m^2}{(n+m) (m+1)} \Big)    \nonumber 
\end{align} 
is positive and $\epsilon_1 - 2\sqrt{1 -p_E^n } < \epsilon_s$. In the case that there is no $\nu$ such that $\xi$ is positive or $\epsilon_1 - 2\sqrt{2 \Gamma(M,T,\alpha)  } < \epsilon_s$ is not satisfied, the key length is $0$.  
\end{thm}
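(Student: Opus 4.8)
\emph{Proof strategy.} The plan is to reduce everything, via the finite-key formula~\eqref{eq:KeyLength}, to a lower bound on the smooth min-entropy $H_{\min}^\epsilon(X_B|E)_\rho$, where $\rho_{X_BE}$ is the classical--quantum state of Bob's amplitude raw key and Eve's side information conditioned on the protocol passing. Once one shows $H_{\min}^\epsilon(X_B|E)_\rho\ge n[\log\tfrac1{c(\delta)}-\log\gamma(d_0+\mu)]$ for the stated $\mu$, substituting into~\eqref{eq:KeyLength} with the smoothing parameter $\epsilon$ chosen as dictated there yields~\eqref{thm,eq:KeyLengthCoherent}. The argument splits into four ingredients: a virtual-measurement reduction, the energy test truncating to finite dimension, the entropic uncertainty relation with quantum memory, and the statistical estimation of the virtual phase correlation.

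First I would exploit that Alice's and Bob's basis choices are independent and uniform, so that the $n$ signals carrying the amplitude raw key form a random subset of the sifted signals; one may then consider the counterfactual state in which Bob had measured phase on exactly those $n$ signals, producing strings $\hat Y_A,\hat Y_B\in\cX^n$, and derive a bound on $H_{\min}^\epsilon(X_B|E)$ depending only on the statistics of this virtual measurement. Since the honest measurements have unbounded spectrum, the uncertainty relation and the max-entropy bounds must be applied to a truncated, effectively finite-dimensional version of the state; here the energy test $\cT(\alpha,T)$ does the work: conditioned on its passing, the probability that any amplitude or phase quadrature on the raw-key block exceeds $M$ is at most $n\,\Gamma(M,T,\alpha)$, so $\rho$ is $2\sqrt{2n\,\Gamma(M,T,\alpha)}$-close in trace distance to a state supported on the interval structure $\{I_k\}$, and all entropic estimates are carried out on the latter and transferred back at the cost of this trace distance entering the smoothing budget.

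On the truncated state I would invoke the uncertainty relation with quantum memory for the conjugate amplitude/phase measurements~\cite{Berta13} in its smooth-entropy form, $H_{\min}^\epsilon(X_B|E)\ge n\log\tfrac1{c(\delta)}-H_{\max}^{\epsilon}(\hat Y_B|\hat Y_A)$ with overlap $c(\delta)$ as in~\eqref{eq:overlap1}; the replacement of Alice's quantum system by the classical string $\hat Y_A$ is a data-processing step and is precisely what makes the bound independent of Alice's measurement device. It then remains to upper-bound $H_{\max}(\hat Y_B|\hat Y_A)$, for which I would use the combinatorial estimate (as in~\cite{Furrer12}) that $H_{\max}(\hat Y_B|\hat Y_A)\le n\log\gamma(d)$ whenever $d(\hat Y_A,\hat Y_B)\le d$, with $\gamma$ as in~\eqref{eq:gamma}. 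The whole problem thus collapses to showing that, except with a controllably small probability, $d(\hat Y_A,\hat Y_B)\le d_0+\mu$.

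This last step is the heart of the matter and the part I expect to be the main obstacle. Conditioned on passing parameter estimation we only control the \emph{empirical} distance $d^\PE=d(Y_A^\PE,Y_B^\PE)\le d_0$ on the independently chosen phase block of size $k$; I would connect this to the virtual distance on the raw-key block by a concentration inequality for sampling without replacement of Bernstein--Serfling type, where the per-signal differences $|X^j-Y^j|$ lie in $\{0,\dots,2M/\delta-1\}$ after truncation (the source of the $\tfrac43(M/\delta)\log\xi^{-1}$ term in~\eqref{eq:mu}) and their variance is controlled by $\sigma_*^2$ of~\eqref{eq:Sigma}. Since this variance is not known a priori, I would bound it by the measured second moments $V_d^\PE,V_{Y_A}^\PE,V_{Y_B}^\PE$ via a Cauchy--Schwarz step, together with an auxiliary Hoeffding estimate relating the empirical phase second moments on the $m$-blocks to their population values; the failure probability of that auxiliary estimate is exactly the $2\exp(-2(\nu/M)^2 nm^2/((n+m)(m+1)))$ term and its slack is the free parameter $\nu$. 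Collecting the energy-test error $2\sqrt{2n\,\Gamma(M,T,\alpha)}$, this variance-estimation error, and the sampling error into a single quantity, and demanding that the total fit inside the smoothing budget left by $\epsilon_s$ and $\epsilon_1$, is precisely the requirement that $\xi$ in~\eqref{eq:xi} be positive (with $\epsilon_1-2\sqrt{1-p_E^n}<\epsilon_s$ ensuring that budget is nonempty in the first place); optimizing over $\nu$ then gives the stated $\mu$, and if no such $\nu$ exists or the budget constraint fails the bound degenerates and the certified key length is $0$.
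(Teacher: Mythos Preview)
Your proposal is correct and follows essentially the same route as the paper: reduce via~\eqref{eq:KeyLength} to bounding $H_{\min}^\epsilon(X_B|E)$, use the energy test to control the error from the unbounded range, apply the smooth-entropy uncertainty relation with overlap $c(\delta)$ to pass to $H_{\max}^\epsilon$ of the virtual phase data conditioned on Alice, use data processing to replace $A^n$ by $Y_A^{\Key}$, and then run the two-step Serfling/Bernstein estimation (second moments first, then distance) to obtain the deviation $\mu$. One technical nuance worth noting: rather than ``truncating to a finite-dimensional state,'' the paper keeps the full mode and compares the actual coarse-interval outcomes $\{I_k\}$ (whose end intervals are infinite) to an auxiliary \emph{uniform} discretization $\{\tilde I_k\}_{k\in\mathbb Z}$, to which the uncertainty relation with $c(\delta)$ applies directly; the energy test then bounds the purified distance between the two post-measurement states, which is the cleaner way to make the overlap constant rigorous.
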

The proof of the above theorem will be given in Section~\ref{sec:SecAnalysis}. Before that we present some estimates of the obtained key rates for experimentally feasible parameters. 

\begin{figure}\begin{center}\includegraphics*[width=8.8cm]{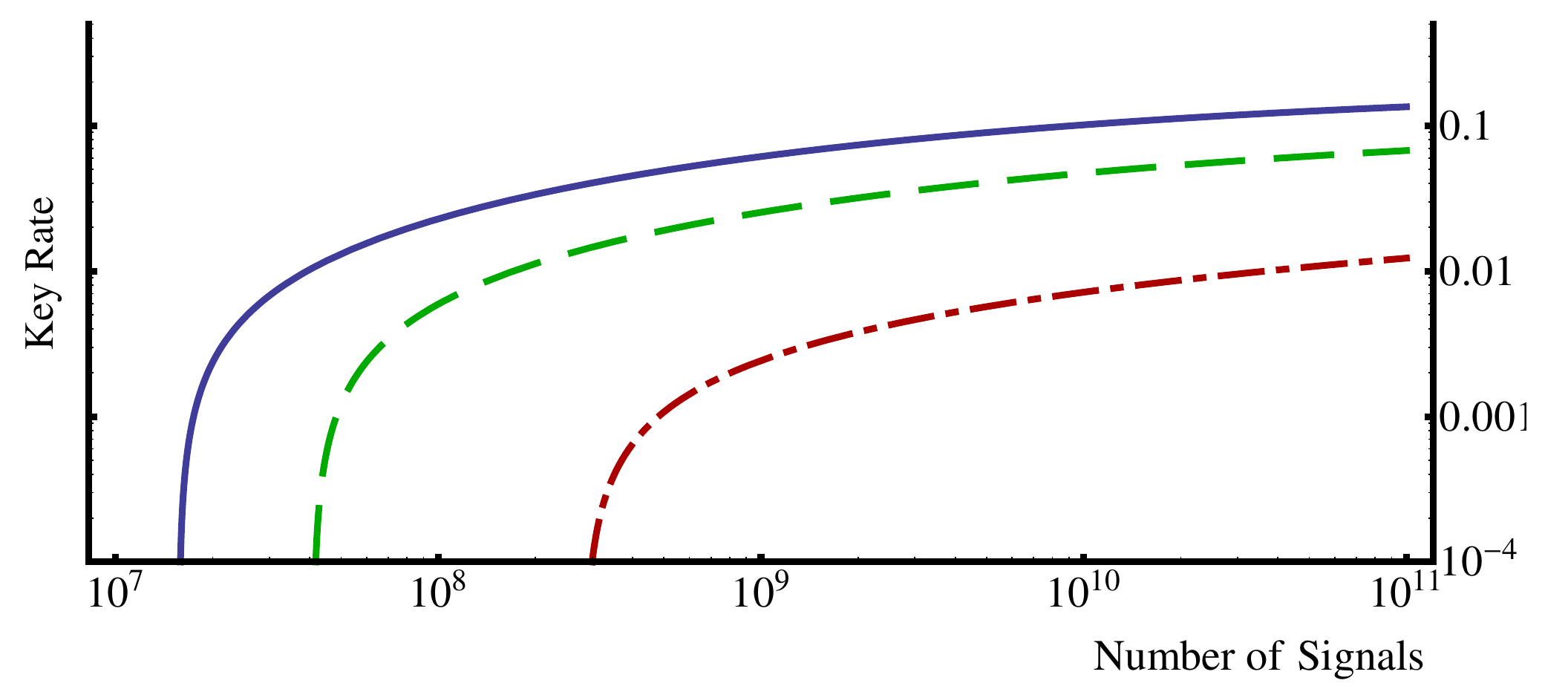}\caption{\label{fig:EC95} 
The plot shows the key rate $\ell/N_\tot$ for squeezing and antisqueezing of $11$dB and $16$dB and reconciliation efficency $\beta =0.95$ depending on the number of signals $N_\tot$. Bob's total losses $\eta_B$ are $0.45$ (solid line), $0.50$ (dashed line) and $0.55$ (dash-dotted line). Since the source is assumed to be in Alice's laboratory her losses are set to $\eta_A=0$. We set the excess noise $\eta_\ex = 0.01$, the security parameters to $\epsilon_s=\epsilon_c=10^{-9}$, and the test parameters to $T=0.99$ and $\alpha = 28$.    
}\end{center}\end{figure}

\subsection{Discussion of Key Rates} \label{sec:KeyRates}

For the following, we consider a two-mode squeezed state with squeezing $\lambda_\sq$ and antisqueezing $\lambda_\asq$ given by 
\begin{equation}\label{eq:CM}
\Gamma = \left( \begin{array}{cc}
\Gamma_A & \Gamma_{\text{cor}}  \\
 \Gamma_{\text{cor}}  & \Gamma_B 
 \end{array} \right) \,  ,
\end{equation}
where $\Gamma_A = \Gamma_B = a \idty$ and $ \Gamma_{\text{cor}} = \sqrt{a^2-b^2} Z$ with $a=\frac 12 (10^{\frac{\lambda_\sq}{10}} + 10^{\frac{\lambda_{\asq}}{10}})$, $b=10^{\frac{\lambda_\asq-\lambda_\sq}{20}}$ and $Z=\diag(1,-1)$.
The fiber losses of the channel are simulated by mixing the signal with vacuum at a beam splitter. We quantify the losses on Alice's and Bob's arm by $\eta_A$ and $\eta_B$ which specifies the reflectivity of the beam splitter, and thus, the amount of vacuum in the outgoing signal. We further include excess noise $\eta_\ex$ modeled as a classical Gaussian noise channel acting on the variances of quadratures as $V\mapsto V + \eta_\ex tV_{\vac}$ with $t$ the transmittance of the  channel and $V_{\vac}$ the variance of the vacuum (see, e.g.,~\cite{Lodewyck2007,Weedbrook2011}). This transforms the covariance matrix in \eqref{eq:CM} to
\begin{equation}\label{eq:LossModel}
\left( \begin{array}{cc}
\bar\eta_A \Gamma_A + (\eta_A+\eta_\ex\bar\eta_A )\Gamma_{\vac} & \sqrt{\bar\eta_A\bar\eta_B}\, \Gamma_{\text{cor}}  \\
 \sqrt{\bar \eta_A \bar\eta_B} \, \Gamma_{\text{cor}}  & \bar\eta_B  \Gamma_B+ (\eta_B+\eta_\ex \bar\eta_B)\Gamma_{\vac}
 \end{array} \right)
\end{equation}
where $\bar \eta_A = 1-\eta_A$, similar $\bar \eta_B$ and $\Gamma_\vac$ denotes the covariance matrix of the one-mode vacuum. 

In the protocol, the scaling factors for Alice's measurement $t_q$ and $t_p$ have to be adjusted. In an experiment, $t_p$ should be chosen such that the distance $d^{\PE}$ is small. A convenient way for that is to determine $\tilde Q_A$ and $\tilde P_A$ such that the second moments of Alice's and Bob's (continuous) amplitude and phase measurements match. These values can be determined locally and communicated in the classical post-processing step.  

The important parameter of the protocol that is directly related to the state is $d_0$, which should be chosen such that with high probability the distance $d^\PE$ computed for many samples of the Gaussian state given by the covariance matrix~\eqref{eq:LossModel} is smaller than $d_0$. 

The leakage in the reconciliation protocol $\ell_{\IR}$ is set to~\cite{Leverrier2010}
\begin{equation}
\ell_{IR} = H(X_B) - \beta I(X_B:X_A) \, ,
\end{equation}
where $H(X_B) $ denotes the Shannon entropy of $X_B$, $I(X_A:X_B)$ the mutual information between $X_A$ and $X_B$, and $\beta$ the efficiency of the reconciliation protocol. The efficiency in the Shannon limit is $\beta=1$, while $\beta<1$ for any finite $n$. 

It is now important that the protocol is robust, that is, it passes with high probability if no eavesdropper is presence. This means that the test $\cT(\alpha,T)$ has to pass with high probability for the above two-mode squeezed state. The probability that $\cT(\alpha,T)$ fails can be easily upper bounded by (see inequality~\eqref{eq:Tailbound})
\begin{equation}
\sqrt{ 8\pi} \sigma_t  N_\tot \text{e}^{-\alpha^2 /(2 \sigma_t^2)}  
\end{equation}
where $\sigma_t$ is the maximum of the standard deviations of the outcome distributions of $q_{t^1}$ and $ p_{t^2}$. Hence, by setting $\alpha = \sqrt{2\sigma_t^2 \ln(\sqrt{8\pi}\sigma_t N_\tot/\epsilon_\cT)}$ we ensure that the $\cT(\alpha,T)$ fails with probability smaller than $\epsilon_{\cT}$. Depending on $\alpha$ and $T$, we then choose $M$ such that $2\sqrt{2 n \Gamma(\alpha,T,M)}=\epsilon_2$ is smaller than $\epsilon_s$.

\begin{figure}\begin{center}\includegraphics*[width=8.8cm]{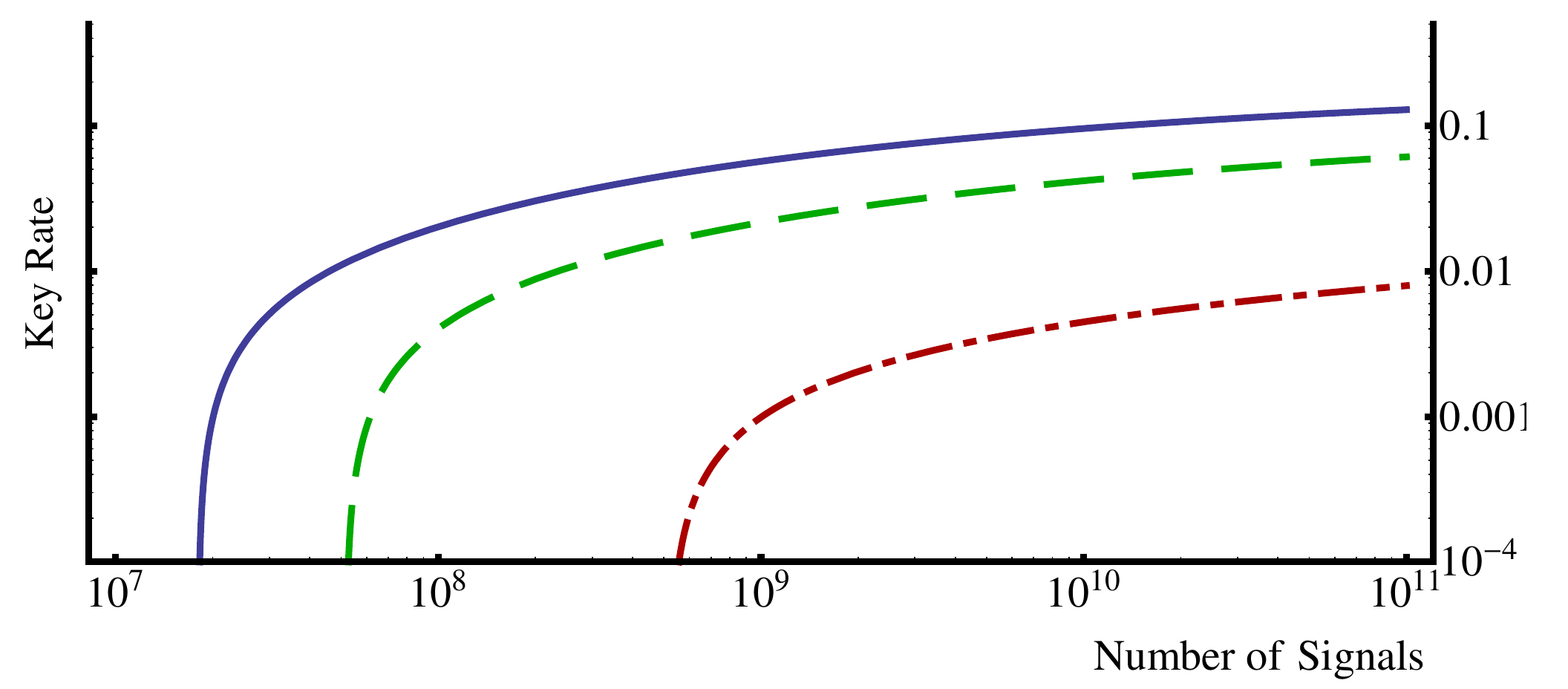}\caption{\label{fig:EC90} 
The plot shows the key rate $\ell/N_\tot$ for squeezing and antisqueezing of $11$dB and $16$dB and reconciliation efficiency $\beta =0.90$ depending on the number of signals $N_\tot$. Bob's total losses $\eta_B$ are $0.40$ (solid line), $0.45$ (dashed line) and $0.50$ (dash-dotted line). The other parameters are as in Figure~\ref{fig:EC95}. 
}\end{center}\end{figure}

We define the key rate as $\ell/N_\tot$ where $\ell$ is taken as in \eqref{thm,eq:KeyLengthCoherent} and optimized over the probability $r$ for choosing amplitude or phase. For that we simply express $n$, $k$, and $m$ in terms of $N_\tot$ and $r$. We further optimize the key rate over the spacing $\delta$ under the constraint $1\geq \delta \geq 0.01$ to account for the resolution of the detector. 
The security parameters are chosen as $\epsilon_s=\epsilon_c=10^{-9}$.
Moreover, we set $\epsilon_{\cT}=10^{-9}$, $T=0.99$ and $\epsilon_2 = \epsilon_s/10$ for which we find that $\alpha \leq 28 $ and $M\leq 8000$ in units of $\hbar =2$ for relevant values of $N_\tot$ and realistic squeezing strengths. 

In Figure~\ref{fig:EC95} and~\ref{fig:EC90} we plotted the key rate against the total number of exchanged signals $N_\tot$ for a reconciliation efficiency $\beta=0.95$ and $\beta=0.9$, respectively. The squeezing and antisqueezing is chosen as $\lambda_\sq=11$ and $\lambda_\asq = 16$ which has experimentally been achieved in the laboratory~\cite{Eberle11} at $1550$nm. Note that this squeezing values already include the efficiency of the homodyne detection. We further set the excess noise to $\eta_{\ex}=0.01$ in the plots.   
We note that a reconciliation efficiency of about $0.9$ is more realistic with current non-binary error correction codes.  
The maximal amount of losses to still obtain a secure key rate is slightly above $55$\% for $\beta =0.95$ and $50$\% for $\beta =0.9$. The key rate in dependence of the distance for different values of $\beta$ is plotted in Figure~\ref{fig:Dist}. For that we used a loss rate of $0.20$ dB per km and additional coupling losses of $0.05$. We see that for the same squeezing rates as above and an error correction efficiency of $0.95$, a positive key rate can be obtained for over $16$ km.

\begin{figure}\begin{center}\includegraphics*[width=8.8cm]{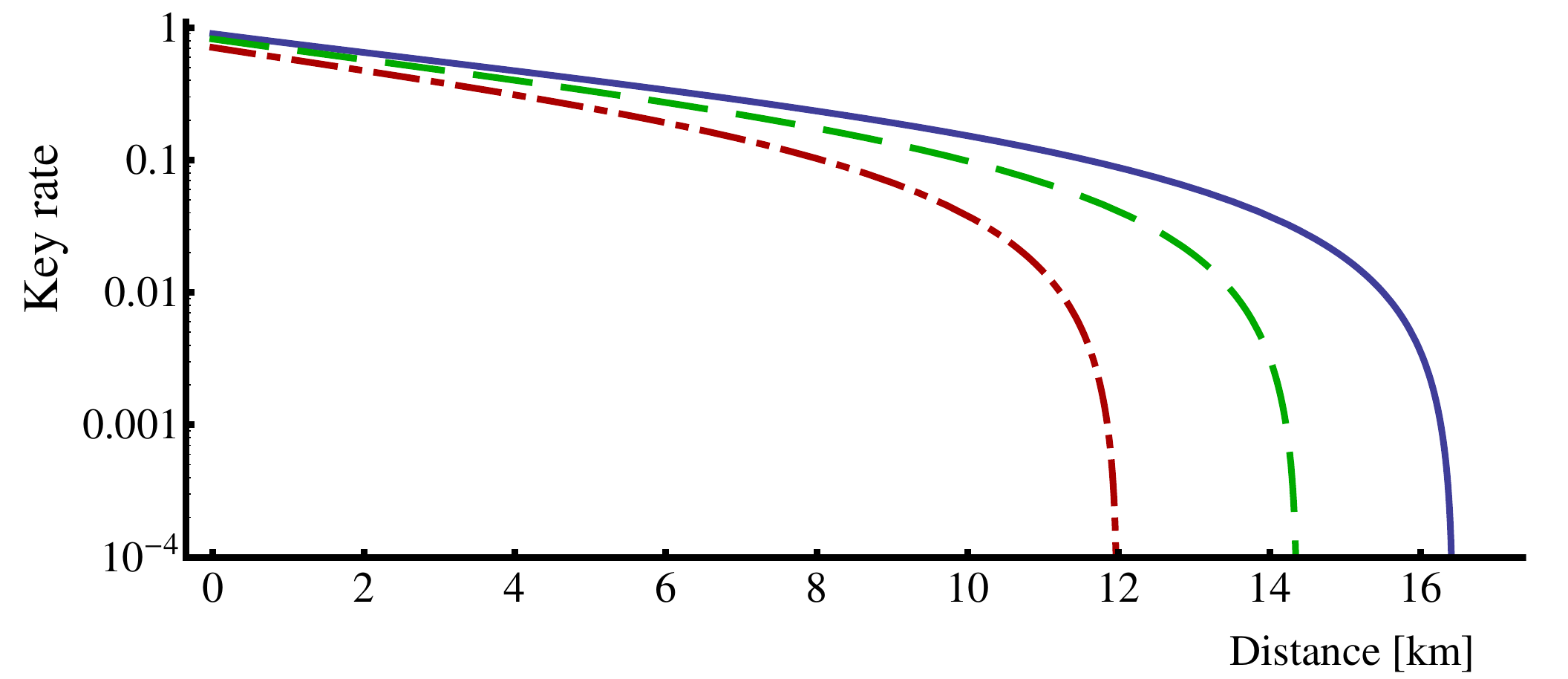}\caption{\label{fig:Dist} 
The key rate is plotted against the distance for $N_\tot=10^9$, squeezing and antisqueezing of $11$dB and $16$dB and reconciliation efficiency $\beta$ of $0.95$ (solid line), $0.90$ (dashed line) and $0.85$ (dash-dotted line). We assumed losses of $0.20$dB per km plus $0.05$ coupling losses. All the other parameters are as in Figure~\ref{fig:EC95}. 
}\end{center}\end{figure}


\section{Security Analysis} \label{sec:SecAnalysis}

\subsection{Estimation of Eve's Information by the Uncertainty Principle with Quantum Memories}\label{sec:UR}
The first step of the security proof is the same as in~\cite{Furrer12} except that the roles of Alice and Bob are exchanged and that the basis choices for parameter estimation and key generation are different. We start with the definition of the min- and max-entropies. 

Let $X$ be a random variable over a countable set $\cX$ distributed according to $p_x$. Suppose further that $X$ is correlated to a quantum system B associated with Hilbert space $\cH_B$ and corresponding state space $\cS(\cH_B)=\{\rho_B |\, \rho_B\geq 0,\, \tr\rho_B = 1\}$.  
The min-entropy of a classical quantum state $\rho_{XB}=\sum_{x} p_x \kettbra x \otimes \rho_B^x$ with $\rho_B^x\in\cS(\cH_B)$ is defined as the negative logarithm of the optimal success probability to guess $X$ given access to the quantum memory $B$~\cite{koenig08}. In formulas, this is
\begin{equation}\label{minEnt}
H_{\min}(X|B)_\rho = -\log \Big( \sup_{\{E_x\}} \sum_x p_x \tr(E_x \rho_B^x) \Big) \, ,
\end{equation}
where the supremum is taken over all positive operator valued measures (POVM) $\{E_x\}$, i.e., $E_x\geq 0$ and $\sum_x E_x =\idty$. A further entropy related to the min-entropy via the uncertainty relation is the max-entropy which is defined as 
\begin{equation}
H_{\max}(X|B)= 2 \log \Big( \sup_{\sigma_B} \sum_x \sqrt{F(p_x\rho_B^x,\sigma_B)} \Big) \, ,
\end{equation} 
where the supremum runs over all states $\sigma_B\in\cS(\cH_B)$ and $F(\rho,\sigma)=(\tr \vert\sqrt{\rho}\sqrt{\sigma}\vert)^2 $ denotes the fidelity. 

The corresponding smooth min- and max-entropy are then obtained by optimizing the min- and max-entropy over nearby states. The closeness of states is measured with the purified distance $\cP(\rho,\sigma) = \sqrt{1-F(\rho,\sigma)}$~\cite{Tomamichel09}. We also allow for sub-normalized states defining the smooth min- and max-entropy as 
\begin{align}
H_{\min}^\epsilon(X|B)_\rho &= \sup_{\tilde\rho_{XB}} H_{\min}(X|B)_{\tilde\rho} \, , \\
H_{\max}^\epsilon(X|B)_\rho &= \inf_{\tilde\rho_{XB}} H_{\max}(X|B)_{\tilde\rho} \, ,
\end{align}
where the supremum and infimum are taken over sub-normalized states, i.e.,~$\tilde\rho_{XB} \geq 0$ and $\tr\tilde\rho_{XB} \leq 1$, with $\cP(\rho_{XB},\tilde\rho_{XB})\leq \epsilon$. 

Let us consider now the situation in the protocol. According to~\eqref{eq:KeyLength}, we have to bound the smooth min-entropy of the state associated with the raw key of Bob $X_B$ and the system of Eve $E$. Suppose that $\rho_{A^nB^nE}$ denotes the state of the $n$ modes on which the amplitude measurements for the raw key generation are performed conditioned on the event that the protocol passes. The state $\rho_{X_BE}$ of $X_B$ and $E$ can then be obtained by measuring the amplitudes of $B^n$ according to the discretization induced by the intervals $\{I_k\}$. But since the intervals $I_1$ and $I_{2M/\delta}$ are of infinite length any uncertainty relation will get trivial for the associated measurements. 

In order to avoid this problem, let us introduce phase and amplitude measurement with discretization $\{\tilde I_k\}_{k\in\mathbb Z}$, where 
\begin{align*}
\tilde I_k =(M +(k-1)\delta,-M +k\delta]  , \  k\in\mathbb Z \, .
\end{align*}
We note that $\tilde I_k = I_k$ for $k=2,3,...,2M/\delta-1$. We denote by $\tilde X_B$ ($\tilde Y_B^{\Key}$) the classical random variable corresponding to Bob's discretized amplitude (phase) measurement outcome $k\in\mathbb Z$. Moreover, the classical quantum state of $\tilde X_B$ ($\tilde Y_B^{\Key}$) and $A^nE$ is denoted by $\rho_{\tilde X_B A^nE}$ ($\rho_{\tilde Y_B^\Key A^nE}$). As we will see below, the energy test assures that the purified distance between $\rho_{ X_BE}$ and $\rho_{\tilde X_BE}$ as well as $\rho_{  Y^\Key_BA^n}$ and $\rho_{\tilde Y^\Key_BA^n}$ are small. 

Let us assume for now that $\cP(\rho_{ X_BE},\rho_{\tilde X_BE})$ and $\cP(\rho_{ Y^\Key_BA^n},\rho_{\tilde Y^\Key_BA^n})$ are smaller than $\tilde\epsilon$. We then find that 
\begin{align} \label{eq:boundMin}
H_{\min}^{\epsilon +\tilde\epsilon}(X_B|E)_\rho &\geq H_{\min}^{\epsilon}(\tilde X_B|E)_{\rho} \, , \\
H_{\max}^{\epsilon +\tilde\epsilon}(\tilde Y_B|A^n)_\rho &\leq H_{\max}^{\epsilon}(Y_B^\Key|A^n)_{\rho} \, ,\label{eq:boundMax}
\end{align} 
which is a simple consequence of the definition of smooth min- and max-entropy. The uncertainty relation in~\cite{Furrer11} then provides the inequality~\footnote{Note that the uncertainty relation in~\cite{Furrer11} was only proven for the non-smoothed min- and max-entropy. However, the extension of the inequality to smooth entropies is straightforward using similar arguments as in~\cite{tomamichel11}.}
\begin{equation} \label{eq:URsmooth}
H_{\min}^{\epsilon}(\tilde X_B|E)_{\rho} \geq - n \log c(\delta) - H_{\max}^{\epsilon}(\tilde Y_B|A^n)_{\rho} \, ,
\end{equation} 
with
\begin{equation}\label{eq:overlap2}
c(\delta) = \frac{1}{2\pi}\delta^2\cdot S_{0}^{(1)}\left(1,\frac{\delta^2}{4}\right)^{2} \, ,
\end{equation}
where $S_{0}^{(1)}(\cdot,x)$ is the $0$th radial prolate spheroidal wave function of the first kind. In the regime of interest $\delta \leq 1$, $c(\delta)$ can be approximated as in~\eqref{eq:overlap1}. 

If we combine now the inequalities~\eqref{eq:boundMin},~\eqref{eq:boundMax} and~\eqref{eq:URsmooth}, we obtain from the formula in~\eqref{eq:KeyLength} a lower bound on the key length given by
\begin{equation}\label{eq:KeyLength2}
-n\log c(\delta) - H_{\max}^\epsilon(Y_B^\Key|A^n)_\rho -  \ell_{\IR} -  \log \frac{1}{\epsilon_1^2\epsilon_c}  +2 \, ,
\end{equation}
 where $\epsilon \leq (\epsilon_1-\epsilon_s)/(2p_\pass) - 2\tilde\epsilon$. In the next section we use the energy test to give a bound on $\tilde\epsilon$.


\subsection{Failure Probabiltiy of the Energy Test}\label{sec:Test}
The goal of this section is to give a bound on the purified distance of $\rho_{ X_BE}$ and $\rho_{\tilde X_BE}$ as well as $\rho_{  Y^\Key_BA^n}$ and $\rho_{\tilde Y^\Key_BA^n}$. It turns out that they can be bounded by the probability that the energy test is passed although an amplitude or phase larger than $M$ is measured. We start with $\rho_{ X_BE}$ and $\rho_{\tilde X_BE}$. 

In a first step we compute that 
\begin{equation}\label{eq:Distance}
\cP(\rho_{ X_BE},\rho_{\tilde X_BE}) \leq \sqrt{1-\Pr[ \wedge_i \{|q_i| \leq M\} | \rho_{A^nB^nE}]^2} \, 
\end{equation}
where $\{|q_i| \leq M\} $ denotes the event that the absolute value of the continuous amplitude measurement of Bob's ith mode is smaller than $M$. This follows directly from the properties of the fidelity of a classical quantum state
\begin{align*}
{F(\rho_{ X_BE},\rho_{\tilde X_BE})}^{1/2} 
&=  \sum_{k=2}^{2M/\delta-1} F(p_k\rho^k_E, p_k\rho_E^k)^{1/2} \\
 &+  \sum_{k=1,{2M}/{\delta}} F(p_k\rho^k_E + q_k\sigma^k_E, p_k\rho_E^k)^{1/2} \\
 &\geq \sum_{k=1}^{2M/\delta-1} F(p_k\rho^k_E, p_k\rho_E^k)^{1/2}  \, \\
& = \sum_{k=1}^{2M/\delta-1} p_k \, ,
\end{align*}
where $p_k$ is the probability of measuring an amplitude in the interval $\tilde I_k$, $\rho_E^k$ the corresponding conditional state of Eve, and $q_i$, $\sigma^i_E$ for $i=1,2M/\delta$ similar for amplitude measurements smaller than $-M$ and larger than $M$, respectively. The inequality follows from $F(\rho +\sigma,\rho)^{1/2}\geq F(\rho,\rho)^{1/2}$ for any two non-normalized states $\rho$ and $\sigma$. Note now that the last line of the above computation is nothing else than $\Pr[ \wedge_i \{|q_i| \leq M\} | \rho_{A^nB^nE}]$ such that the bound~\eqref{eq:Distance} follows from the definition of the purified distance. 

We then denote the probability that Bob measures an amplitude larger than $M$ conditioned that the protocol passes by 
\begin{align}
p_\fail & = \Pr[  \neg \wedge_i \{|q_i| \leq M\} | \pass ] \\
 & = 1- \Pr[ \wedge_i \{|q_i| \leq M\} | \rho_{A^nB^nE}] \, , \label{eq: Prob1}
\end{align} 
where the second inequality follows since $\rho_{A^nB^nE}$ is the state conditioned that the protocol passes. Using $\neg \wedge_i \{|q_i| \leq M\} = \vee_i\{|q_i| > M\}$ and Bayes theorem, we obtain by simple manipulations
\begin{align*}
p_\fail & = \frac{1}{\ppass} \Pr [ \vee_i\{|q_i| > M\} \wedge \pass ]  \\
& \leq  \frac{1}{\ppass} \sum_i \Pr [|q_i |> M \wedge \pass ] \\
& \leq  \frac{1}{\ppass} \sum_i \Pr [|q_i| > M \wedge |q_{t^1_i}|\leq \alpha ] \, ,
\end{align*} 
where the last inequality holds since pass of the protocol implies that the energy test is passed which implies that $|q_{t^1_i}|\leq \alpha$. We now bound each term individually by 
\begin{align*}
&\Pr \big[|q_i| > M \wedge |q_{t^1_i}|\leq \alpha \big] \\
& = \int_{|x|\geq M} \Pr[q_i=x] \Pr \big[|q_{t^1_i}|\leq \alpha \ \big| \ q_i = x  \big] \ \dd x \\
& \leq \sup_{|x|\geq M}  \Pr \big[|q_{t^1_i}|\leq \alpha \ \big| \ q_i = x \big]  \, , 
\end{align*}
where the supremum in the last line refers to the essential supremum. 

We then show the following lemma. 
\begin{lem}\label{lem:FailureProb}
Let us assume that the energy test $\cT(\alpha,T)$ is passed and set $\mu=\sqrt{\frac{1-T}{2T}}$ and $\lambda = (\frac{2T-1}{T})^2$. If $\alpha\leq \mu M$, then it holds that $\sup_{|x|\geq M}  \Pr [|q_{t^1_i}|\leq \alpha \ | \ q_i = x ]$ is upper bounded by 
\begin{equation}\label{lem,eq:FailureProb}
 \Gamma(M,T,\alpha):=\frac{\sqrt{1+\lambda} + \sqrt{1+\lambda^{-1}}}{2} \exp\big({{\ -\frac{(\mu M - \alpha)^2}{T(1+\lambda)/2}}}\big)\, .
\end{equation}
\end{lem}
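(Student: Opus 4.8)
The plan is to reduce the statement to a one-dimensional Gaussian tail estimate. First I would write out the Heisenberg input--output relations for the two beam splitters appearing in $\cT(\alpha,T)$: the incoming amplitude quadrature $q_i$, mixed with the vacuum mode $a$ at transmittance $T$, produces a transmitted quadrature and a reflected quadrature $q_{a'}$, and the subsequent balanced beam splitter between $a'$ and the vacuum mode $b$ gives $q_{t^1}=\tfrac{1}{\sqrt 2}(q_{a'}+q_b)$ (and similarly $p_{t^2}$). Substituting, $q_{t^1}$ becomes an explicit linear combination of $q_i$ and the two vacuum quadratures $q_a,q_b$. The key point is that, conditioned on the value $q_i=x$, the remaining randomness in $q_{t^1}$ comes only from the two vacuum modes, which are in a fixed state independent of Eve's attack; hence $q_{t^1}\,|\,q_i=x$ is Gaussian with mean $\mu x$ (the proportionality constant being dictated by $T$ together with the normalization conventions, which is where $\mu=\sqrt{(1-T)/2T}$ enters) and a variance that is a fixed function of $T$, conveniently parametrized by $\lambda=((2T-1)/T)^2$.

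With this reduction in hand the estimate is routine. For $|x|\ge M$ the conditional mean of $q_{t^1}$ has modulus at least $\mu M$, and the hypothesis $\alpha\le\mu M$ guarantees that the acceptance window $\{\,|q_{t^1}|\le\alpha\,\}$ lies entirely in the tail of this Gaussian, at distance at least $\mu M-\alpha$ from the mean. I would then bound $\Pr[\,|q_{t^1}|\le\alpha \mid q_i=x\,]$ by the Gaussian mass outside an interval of half-width $\mu M-\alpha$ around the mean; evaluating (or crudely over-estimating) the Gaussian integral yields the exponential factor $\exp\!\big(-(\mu M-\alpha)^2/(T(1+\lambda)/2)\big)$, while keeping track of both tails and of the Gaussian normalization produces the prefactor $\tfrac{1}{2}\big(\sqrt{1+\lambda}+\sqrt{1+\lambda^{-1}}\big)$. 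The supremum over $|x|\ge M$ is then immediate, the bound being monotone in $|x|$.

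I expect the only genuine subtlety to be making the conditioning rigorous and manifestly state-independent: one must check that conditioning on $q_i=x$ really does leave $q_{t^1}$ distributed as the stated fixed Gaussian, i.e.\ that the beam-splitter relations decouple the (arbitrary, possibly high-energy, non-Gaussian) signal from the vacuum noise in exactly the right way, and, relatedly, that the essential supremum over $x$ is handled cleanly rather than an honest supremum that could be spoiled on a measure-zero set. The bookkeeping that pins down the precise constants $\mu$, $\lambda$ and the prefactor is mechanical but must be carried out carefully; everything beyond that is a standard one-sided Gaussian tail bound.
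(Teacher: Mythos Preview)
Your overall strategy coincides with the paper's: write $q_{t^1}$ as a linear combination of the input quadratures via the beam-splitter relations, condition on the measured value $q_{s'}=x$, and reduce to a Gaussian tail estimate coming from the two vacuum modes $a,b$. The paper indeed obtains $q_{t^1}=d_1 q_a+d_2 q_b+\mu x$ after the conditioning, just as you anticipate.

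Where you diverge is in the final estimate. You propose to collapse $d_1 q_a+d_2 q_b$ to a single one-dimensional Gaussian and apply a standard one-sided tail bound; this is clean and gives an exponent with denominator $2(d_1^2+d_2^2)$ and a simple prefactor. The paper instead keeps the two-dimensional integral $\int_A \chi_{\vac}(q_a)\chi_{\vac}(q_b)\,\dd q_a\,\dd q_b$ over the half-plane $A=\{d_1 q_a+d_2 q_b+\mu x\le\alpha\}$, then splits $A$ into $A_1=A\cap\{q_a\ge 0\}$ and $A_2=A\setminus A_1$. In each piece it applies the one-dimensional bound $\int_l^\infty e^{-q^2/2}\,\dd q\le\sqrt{\pi/2}\,e^{-l^2/2}$ to one variable (to $q_a$ on $A_1$, to $q_b$ on $A_2$) and then performs the remaining Gaussian integral exactly. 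It is precisely this asymmetric two-region treatment that produces the two terms $\tfrac12\sqrt{1+\lambda^{-1}}$ and $\tfrac12\sqrt{1+\lambda}$ in the prefactor and the denominator $T(1+\lambda)/2$ in the exponent. A direct one-dimensional bound will not reproduce these particular constants, so if your goal is to land on exactly the stated $\Gamma(M,T,\alpha)$ you will need the paper's split rather than your single-Gaussian shortcut; if you only need \emph{some} exponentially small bound of the same type, your route is shorter.

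On the conditioning subtlety you flag: you are right that this is the delicate point, since $q_a$ enters both $q_{s'}$ and $q_{t^1}$, so conditioning on $q_{s'}=x$ does not obviously leave $q_a$ in its vacuum marginal for an arbitrary signal state. The paper does not argue this via independence; instead it substitutes $\chi_s(q_s)$ by the delta distribution enforcing $q_{s'}=x$ and arrives at the inequality $\Lambda_x\le\int_A\chi_{\vac}(q_a)\chi_{\vac}(q_b)\,\dd q_a\,\dd q_b$. You should expect to carry out this step explicitly rather than assert that ``the beam-splitter relations decouple the signal from the vacuum noise''; the decoupling is what has to be shown, and your sketch currently assumes it.
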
 

\begin{proof}
In the following, we suppress the index $i$ since the argument applies independently to all possible incoming modes. We further label the different modes in the energy test setup as in Figure~\ref{fig:Energy}. 
We are interested in computing $\Lambda_x=\Pr [|q_{t^1_i}|\leq \alpha | \ q_i = x ]$ and without loss of generality we can assume that $x\geq 0$. 

In order to compute $\Lambda_x$, we write the characteristic function $\chi_{\text{out}}$ of the output state of modes $s'$, $t^1$, and $t^2$ in terms of the characteristic function $\chi_{\text{in}}$ of the input state of modes $a$, $b$, and $s$. Let $B$ be the matrix describing the linear transformation of the coordinates of the phase space induced by the beam splitters, that is, $r_{\text{out}} = B r_{\text{in}}$, where $r_{\text{in}}=(q_a,p_a,q_b,p_b,q_s,p_s)$ and $r_{\text{out}}=(q_{s'},p_{s'},q_{t^1},p_{t^1},q_{t^2},p_{t^2})$. For the following it will be important that $q_{s'} = \sqrt{T}q_s + \sqrt{1-T} q_a$ and $q_{t^1}= \sqrt{1/2} q_b + \sqrt{T/2} q_a + \sqrt{(1-T)/2} q_s$.

We then have that $\chi_{\text{out}}(r_{\text{out}}) = \chi_{\text{in}}(B^{-1}r_{\text{out}})$, where $\chi_{in}(r_{\text{in}}) = \chi_{\vac}(q_a,p_a)  \chi_{\vac}(q_b,p_b) \chi_{s}(q_s,p_s) $ has product form. Integrating over all output modes under the condition $|q_{s'}|\leq \alpha$ and changing variables $r_{\text{in}} = B r_{\text{out}}$, we obtain that the probability $\Pr [|q_{t^1_i}|\leq \alpha ]$ is given by 
\begin{align}
\int_{\tilde A} \chi_{\vac}(q_a) \chi_{\vac}(q_b)  \chi_s(q_s) \ \dd q_a \ \dd q_b \ \dd q_s  \, ,
\end{align}
 where $\chi_{*}(q)=\int \ \dd p \chi_{*}(q,p)$ and $\tilde A$ is determined by the condition
\begin{equation}
|q_{t^1}|=| \sqrt{1/2} q_b + \sqrt{T/2} q_a + \sqrt{(1-T)/2} q_s | \leq \alpha \, .
\end{equation} 
In order to condition on $q_{s'} = x$, we set $\chi_s(q_s) = \delta( q_s - [\sqrt{1/T} x + \sqrt{(1-T)/T} q_a])$ where $\delta$ denotes the Dirac delta distribution and we used that $q_s' = \sqrt{T}q_s + \sqrt{1-T} q_a$. 
Hence, integrating over $q_s$ results in 
\begin{align}\label{eq:int1}
\Lambda_x  \leq \int_{A}\chi_{\vac}(q_a) \chi_{\vac}(q_b) \ \dd q_a \ \dd q_b  \, ,
\end{align}
where $A=\{(q_a,q_b) | \ d_1q_a + d_2 q_b + \mu x \leq \alpha \} $. Here, we obtained $A$ from $\tilde A$ by setting $q_s=\sqrt{1/T} x + \sqrt{(1-T)/T}q_a$ and removing the absolute value. 

In order to bound the integral in~\eqref{eq:int1}, we split the area $A$ into $A_1=A\cap\{q_a \geq 0\} $ and $A_2 = A\backslash A_1$. If we set $l(q_b)=\max \{0, 1/d_1(\mu x-\alpha -d_2q_b)\}$, we get that the integration over $A_1$ amounts to
\begin{align}
& \frac{1}{2\pi} \int_{-\infty}^\infty \ \dd q_b \ e^{-q_b^2/2 }\int_{l(q_b)}^\infty \dd q_a \ e^{-q_a^2/2} \\ 
&\leq \frac{1}{2\sqrt{2\pi}} \int_{-\infty}^\infty \ \dd q_b \ e^{-q_b^2/2-l(q_2)^2/2} \, ,\label{eq:int2}
\end{align}
where the inequality follows from 
\begin{equation}\label{eq:Tailbound}
\int_l^\infty e^{-q^2/2} \dd q \ \leq \sqrt{\pi/2} \ e^{-l^2/2}
\end{equation}
for $l\geq 0$. A straightforward calculation of~\eqref{eq:int2} gives
\begin{align}\label{eq:Bound1}
 \frac12 \sqrt{1+\lambda^{-1}} \exp{\Big( -\frac{(\mu x - \alpha)^2}{T(1+\lambda)/2}\Big)} \, . 
\end{align}

In order to compute the integral over $A_2$, we note first that $A_2=\{(q_a,q_b) | q_a\leq 0 \ , \ -\infty < q_b \leq u(q_a) \}$ with $u(q_a)=1/d_2[d_1 q_a -(\mu x-\alpha)]$. Using that $u(q_a)\leq 0$ for all $q_a\leq 0$, we can apply again~\eqref{eq:Tailbound} to bound 
\begin{align}
& \frac{1}{2\pi} \int_{-\infty}^0\ \dd q_a \ e^{-q_a^2/2 }\int_{-\infty}^{u(q_a)}\dd q_b \ e^{-q_b^2/2} \\ 
&\leq \frac{1}{2\sqrt{2\pi}} \int_{-\infty}^\infty \ \dd q_a \ e^{-q_a^2/2-u(q_a)^2/2} \, , \label{eq:int3}
\end{align}
where we also extended the integration over $q_a$ to run over the whole real line. Finally, the same calculation as before shows that~\eqref{eq:int3} is given by 
\begin{align}\label{eq:Bound2}
 \frac12 \sqrt{1+\lambda} \exp{\Big( -\frac{(\mu x - \alpha)^2}{T(1+\lambda)/2}\Big)} \, .
\end{align}
We can thus conclude that $\Lambda_x$ is bounded by the sum of~\eqref{eq:Bound1} and~\eqref{eq:Bound2}. Finally, the supremum over $x$ is attained for $x=M$ which completes the proof. 
\end{proof}

By means of Lemma~\ref{lem:FailureProb}, we can now bound $p_\fail\leq n\Gamma(M,T,\alpha)/p_\pass$. Using the relation in~\eqref{eq:Distance} together with $(1-p_\fail)^2 \geq 1-2 p_\fail$, we finally arrive at
\begin{align}
\cP(\rho_{ X_BE},\rho_{\tilde X_BE})   \leq \sqrt{\frac{ { 2 n \ \Gamma(M,T,\alpha)}}{{p_\pass}}} \, .
\end{align}

Let us now consider the case of $\rho_{ Y^\Key_BA^n}$ and $\rho_{\tilde Y^\Key_BA^n}$. It is easy to see that the same strategy can be applied as in the previous situation. This is simply based on the fact that $|p_{t^2}| \leq \alpha$ if the test $\cT(\alpha,M)$ is passed. Hence, following the exactly same steps for the phase measurements as before for amplitude, we find that also 
\begin{align}
\cP(\rho_{  Y^\Key_BA^n},\rho_{\tilde Y^\Key_BA^n})   \leq \sqrt{\frac{ { 2 n \ \Gamma(M,T,\alpha)}}{{p_\pass}}} \, ,
\end{align}
holds. 

Summarizing the above arguments, we have thus shown that~\eqref{eq:KeyLength2} is a lower bound on the key rate if we set 
\begin{equation}\label{eq:tildeEps}
\tilde \epsilon = \sqrt{\frac{ { 2 n \ \Gamma(M,T,\alpha)}}{{p_\pass}}} \, .  
\end{equation}


\subsection{Statistical Estimation of the Max-Entropy}  

The goal of this section is to use the information from the parameter estimation step to upper bound the smooth max-entropy $H_{\max}^\epsilon(Y_B^\Key|A^n)_\omega$. In a first step, we apply Alice's scaled and discretized phase measurement to $A^n$ mapping it to a classical outcome $Y^\Key_A$ also in $\cX^n$. Using now that the smooth max-entropy can only increase under processing of the side-information~\cite{Tomamichel09,Furrer11}, we obtain that 
\begin{equation}\label{eq:DataPr}
H^\epsilon_{\max}(Y_B^\Key|A^n)_\rho \leq H^\epsilon_{\max}(Y_B^\Key|Y^\Key_A)_\rho \, .
\end{equation}

We next note that it has been shown in~\cite{Furrer12} that if $X$ and $Y$ are random variables on $\cX^n\times \cX^n$ distributed according to $Q_{XY}$ for which $\text{Pr}_{Q}[d(X,Y)\geq d] \leq \epsilon^2$ holds, it follows that 
\begin{equation}\label{HmaxBound}
H^\epsilon_{\max}(X|Y)_Q \leq n \log \gamma(d) \, ,
\end{equation}
with $\gamma$ as defined in~\eqref{eq:gamma}. In order to apply this result to bound $H^\epsilon_{\max}(Y_B^\Key|Y^\Key_A)_\rho$, we have to find an estimation of $d^\Key=d(Y_B^\Key,Y^\Key_A)$ that holds with probability $\epsilon^2$. For that we use a large deviation bound and estimate the probability that $d^\Key=d(Y_B^\Key,Y^\Key_A)$ is larger than $d_0 +\mu$ where conditioned on pass $d_0\geq d^{\PE}=d(X_A^{\PE},X_B^{\PE})$. But since the alphabet size scales with $M$ and is thus very large, a direct application of a large deviation bound would result in a large failure probability. This can be avoided by employing a strategy that splits the problem into two estimation steps.

In the first step, we bound in Lemma~\ref{lem:LargeDeviation1} the probability that $\text{m}_2(Y_A^\Key)$ is larger than $V_{Y_A}^\PE+\nu$, respectively, that $\text{m}_2(Y_B^\Key)$ is larger than $V_{Y_B}^\PE+\nu$. This will be done using Serfling's large deviation bound~\cite{Serfling74}. Given that $\text{m}_2(Y_A^\Key) \leq V_{Y_A}^\PE+\nu$ and $\text{m}_2(Y_B^\Key) \leq V_{Y_B}^\PE+\nu$, we can bound the average variance of the distance $d(Y_B^\Key,Y_A^\Key)$ on $Y_B^\Key \times Y_A^\Key$, and thus, of the total population $Y_A^\tot \times Y_B^\tot $ formed by $Y_B^\Key\times Y_A^\Key $ and $Y_B^\PE \times Y_A^\PE$. Indeed, denoting $N=n+k$, we can bound the average variance of the population by 
\begin{align*}
\sigma^2 & = \frac 1 N\sum_i \vert (Y_A^\tot)_i - (Y_B^\tot)_i \vert^2  - d(Y_B^\tot,Y_A^\tot)^2 \\
& \leq \frac k N V^\PE_d  + \frac 1N \sum_i \vert (Y_A^\Key)_i - (Y_B^\Key)_i \vert^2 - (\frac k N d^\PE)^2  \\
& \leq \frac k N (V^\PE_d - \frac k N (d^\PE)^2) + \frac 1N \sum_i (\vert (Y_A^\Key)_i\vert + \vert( Y_B^\Key)_i\vert )^2 
\end{align*}
where we used that $d^\tot = \frac k N d^\PE + \frac n N d^\Key$. Applying the Cauchy-Schwarz inequality, we can then bound $\sum_i (\vert (Y_A^\Key)_i\vert + \vert( Y_B^\Key)_i\vert )^2$ by 
\begin{align}
k \Big(\text{m}_2(Y_A^\Key) +  \text{m}_2(Y_B^\Key) + \big(\text{m}_2(Y_A^\Key) \text{m}_2(Y_B^\Key)\big)^{\frac 12} \Big) \, .
\end{align}
Hence, given that $\text{m}_2(Y_A^\Key) \leq V_{Y_A}^\PE+\nu/\delta^2$ and $\text{m}_2(Y_B^\Key) \leq V_{Y_B}^\PE+\nu/\delta^2$ holds, we find that $\sigma \leq \sigma_*$ with $\sigma_*$ as defined in~\eqref{eq:Sigma}. 

In the second step, we bound in Lemma~\ref{lem:LargeDeviation2} the probability that $d^\Key=d(Y_B^\Key,Y^\Key_A)$ is larger than $d^\PE +\mu$ for a fixed and bounded $\sigma$. Combining these two steps, we can then estimate
\begin{align} \nonumber
 \pr{d^\Key \geq d_0 + \mu | \pass }    
 & \leq \pr{d^\Key \geq d^\PE + \mu | \pass }  \\ \nonumber
  & \leq  \frac 1{p_\pass}  \pr{d^\Key \geq d^\PE + \mu } 
\\  & \leq \frac 1{p_\pass} \Big( \pr{\text{m}_2(Y_A^\Key) > V_{Y_A}^\PE+\nu}\  \nonumber
 \\ \nonumber
& \quad +   \pr{\text{m}_2(Y_B^\Key) > V_{Y_B}^\PE+\nu}  \\  & \quad +  \pr{d^\Key \geq d^\PE + \mu | \text{C}}\Big) \label{eq:Prob1}
\end{align}
where $C$ denotes the condition $\text{m}_2(Y_B^\Key) \leq V_{Y_B}^\PE+\nu$ and $\text{m}_2(Y_A^\Key) \leq V^\PE_{\tilde P_A}$.

\begin{lem}\label{lem:LargeDeviation1} 
Let  $Y$ be a string in $\cX^{n+m}$ and $Y^{P}$ be a random sample without replacement from $Y$ of length $m$ with $\text{m}_2(Y^P)=V_{Y}^\PE$. Then, for the  average second moment of the remaining sample $Y^\Key$ of length $n$, holds that
\begin{align}
\pr{\text{m}_2( Y^\Key) \geq V_{Y}^\PE + \nu } 
 \leq \exp\Big(\frac{-2 \nu^2\delta^4 n m^2}{M^4(n+m) (m+1)} \Big) \, .
\end{align}
\end{lem}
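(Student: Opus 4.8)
The plan is to reduce the statement to a single application of Serfling's concentration bound for sampling without replacement~\cite{Serfling74}, after recasting it in terms of a suitably bounded finite population. First I would fix the (deterministic) string $Y\in\cX^{n+m}$ and introduce the population of reals $z_j:=(Y^j-M/\delta)^2$, $j=1,\dots,n+m$. Since every symbol obeys $Y^j\in\cX=\{1,\dots,2M/\delta\}$ we have $0\le z_j\le M^2/\delta^2$, so the population range satisfies $b-a\le M^2/\delta^2$, i.e.\ $1/(b-a)^2\ge\delta^4/M^4$. With this notation $\text{m}_2(Y)=\frac1{n+m}\sum_j z_j$ is the population mean, $\text{m}_2(Y^\Key)$ is the average of $z$ over the random size-$n$ subset $Y^\Key$, and $\text{m}_2(Y^P)=V_{Y}^\PE$ is the average over its size-$m$ complement; the probability in question is over this uniformly random split.

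Next I would rewrite the event $\{\text{m}_2(Y^\Key)\ge V_{Y}^\PE+\nu\}$ as a one-sided deviation of the size-$n$ sample $Y^\Key$ from the population mean. From the identity $n\,\text{m}_2(Y^\Key)+m\,\text{m}_2(Y^P)=(n+m)\,\text{m}_2(Y)$ one eliminates $\text{m}_2(Y^P)$ and obtains
\begin{equation*}
\text{m}_2(Y^\Key)\ge V_{Y}^\PE+\nu \quad\Longleftrightarrow\quad \text{m}_2(Y^\Key)-\text{m}_2(Y)\ge \frac{m}{n+m}\,\nu\,.
\end{equation*}
It is important here to phrase the bound in terms of the complement sample $Y^\Key$ rather than directly in terms of $Y^P$: applying Serfling to the size-$n$ sample is exactly what yields the factor $(m+1)$ in the final denominator.

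Then I would invoke Serfling's inequality: for a simple random sample $Y^\Key$ of size $n$ drawn without replacement from a population of size $N=n+m$ with range $b-a$, and any $t>0$,
\begin{equation*}
\pr{\text{m}_2(Y^\Key)-\text{m}_2(Y)\ge t}\ \le\ \exp\!\left(\frac{-2n t^2}{\big(1-\tfrac{n-1}{N}\big)(b-a)^2}\right)\,.
\end{equation*}
Substituting $t=\frac{m\nu}{n+m}$, using $1-\tfrac{n-1}{N}=\tfrac{m+1}{n+m}$ and $1/(b-a)^2\ge\delta^4/M^4$, the exponent simplifies to $-2\nu^2\delta^4 n m^2/\big(M^4(n+m)(m+1)\big)$, which is the claimed bound.

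The only genuine subtlety is the combinatorial bookkeeping in the second step: one must apply the without-replacement bound to the complement $Y^\Key$ (size $n$), so that Serfling's variance-reduction factor is $1-\tfrac{n-1}{N}=\tfrac{m+1}{n+m}$ and the exponent acquires $(m+1)$ exactly; phrasing it via $Y^P$ instead would give $1-\tfrac{m-1}{N}=\tfrac{n+1}{n+m}$ and a weaker (though still valid) bound unless $n\ge m$. Everything else --- the range estimate $b-a\le M^2/\delta^2$ and the elementary algebra relating $\text{m}_2(Y)$, $\text{m}_2(Y^\Key)$ and $\text{m}_2(Y^P)$ --- is routine.
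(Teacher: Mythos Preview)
Your proof is correct and follows essentially the same route as the paper's: both reduce the statement to Serfling's bound applied to the size-$n$ complement sample of the values $z_j=(Y^j-M/\delta)^2$, use the identity $n\,\text{m}_2(Y^\Key)+m\,\text{m}_2(Y^P)=(n+m)\,\text{m}_2(Y)$ to translate the deviation $\nu$ into the population-mean deviation $\tfrac{m}{n+m}\nu$, and use the range bound $b-a\le (M/\delta)^2$. Your write-up is in fact slightly more explicit than the paper's about the provenance of the $(m+1)$ factor and the range estimate, but the argument is the same.
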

\begin{proof}
The proof is similar to strategies applied in~\cite{tomamichellim11,Furrer12} and based on a large deviation bound for random sampling without replacement by Serfling~\cite{Serfling74}.
Denoting the population mean of the variance by $V_{Y}=\text{m}_2(Y)$ and $V^\Key_{Y}=\text{m}_2(Y^\Key)$, we have that 
\begin{equation}\label{eq:Population}
n V^\Key_{Y} + m V_{Y}^\PE = (n+m) V_{Y} \, .
\end{equation}
The large deviation bound in~\cite{Serfling74} implies that $\pr{V^\Key_{Y} \geq V_{Y} +\tilde \nu}$ is upper bounded by  
\begin{equation}
\exp\big( -\frac{2\tilde\nu^2 n(n+m)}{(M/\delta)^4(m+1)}\big) \, .
\end{equation}
Since the bound is independent of $V_{Y}$, it is not necessary to know the actual value of $V_{Y}$. Indeed, using the relation in~\eqref{eq:Population}, we obtain the desired bound
\begin{align*}
\pr{ V_{Y}^\Key) \geq V_{Y}^\PE + \nu } &\leq \pr{V^\Key_{Y} \geq V_{Y} + \frac{m}{m+n} \nu} \\
& \leq \exp\big(\frac{-2\nu^2 n m^2}{(M/\delta)^4(n+m) (m+1)} \big)\, .
\end{align*}
\end{proof}

\begin{lem}\label{lem:LargeDeviation2} 
Let $Y_A^\tot \times Y_B^\tot $ be in $(\cX\times \cX)^N$ with $d_\tot= d(Y_A^{\tot},Y_B^{\tot})$ and $Y_A^\PE \times Y_B^\PE$ a random sample from it without replacement of length $k$ with $d^{\PE}=d(Y_A^{\PE},Y_B^{\PE})$. Let further $\sigma^2= \sum_i |(Y_A^\tot)_i-(Y_B^\tot)_i|^2 - d_\tot^2 $ be the average variance of the population.  
Then, for $d^\Key=d(Y_A^{\Key},Y_B^{\Key})$ of the remaining sample $Y_A^\Key\times Y_B^\Key$ of length $n=N-k$, holds that
\begin{align}\label{lem,eq:Bernstein}
\pr{d^\Key \geq d^\PE + \nu } 
 \leq \exp\Big(\frac{-\mu^2 n (k/N)^2  }{2\sigma^2 + 4\mu/3(k/N)(M/\delta)} \Big) \, .
\end{align}
\end{lem}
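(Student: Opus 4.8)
The plan is to prove Lemma~\ref{lem:LargeDeviation2} by reducing the statement about sampling without replacement to a Bernstein-type (or empirical-Bernstein-type) large deviation bound for such samples, exactly in the spirit of the bounds used in finite-key analyses (see~\cite{tomamichellim11,Furrer12}). Concretely, consider the population of $N$ real numbers $z_i = |(Y_A^\tot)_i - (Y_B^\tot)_i|$ for $i=1,\dots,N$, which all lie in $[0,(2M/\delta)]$, with population mean $\bar z = d_\tot$ and population variance $\sigma^2 = \frac1N\sum_i z_i^2 - d_\tot^2$. The parameter-estimation sample picks $k$ of these indices uniformly at random without replacement, giving the sample mean $d^\PE$; the remaining $n = N-k$ indices give $d^\Key$. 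The key identity is the conservation law $k\, d^\PE + n\, d^\Key = N\, d_\tot$, so that $d^\Key - d^\PE = \tfrac{N}{n}(d_\tot - d^\PE)$, which converts a statement about $d^\Key$ into one about the deviation of the sampled mean $d^\PE$ below the population mean $d_\tot$.

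The second step is to apply the appropriate concentration bound for $d^\PE$. One would invoke a Bernstein-type inequality for sampling without replacement (which follows, e.g., from Serfling's reduction from sampling without replacement to the i.i.d.\ case combined with the classical Bernstein inequality, or directly from Hoeffding's observation that sampling without replacement is dominated by sampling with replacement for convex functions), stating that
\begin{equation*}
\pr{d_\tot - d^\PE \geq t} \leq \exp\Big(-\frac{k\, t^2}{2\sigma^2 + \tfrac{2}{3}(2M/\delta)\, t}\Big)
\end{equation*}
or the analogous form with the exact sampling-fraction correction $(1-k/N)$ or $(N-k)/(N-1)$ improving the variance term. Here $\sigma^2$ is the population variance (not the sample variance), and the bounded-range constant is the width $2M/\delta$ of the alphabet of $z_i$, up to the factor-of-two bookkeeping that the authors' $(M/\delta)$ normalization absorbs. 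Setting $t$ so that $\tfrac{N}{n} t = \nu$, i.e.\ $t = \nu n / N = \nu(1-k/N)$, and substituting into the Bernstein bound with $k$ samples, the exponent becomes $-\,k\,(\nu n/N)^2 / (2\sigma^2 + \tfrac{2}{3}(2M/\delta)\nu n/N)$. After rewriting $k\,(n/N)^2 = n\,(k/N)^2\cdot (n/k)\cdot(k/N) $... more carefully, $k (\nu n/N)^2 = \nu^2 n^2 k / N^2$; comparing with the claimed numerator $\mu^2 n (k/N)^2 = \mu^2 n k^2/N^2$, one sees the roles of $n$ and $k$ are exchanged between the version I wrote and the statement, so the correct bound to invoke is the one that concentrates $d^\Key$ directly (treating $Y^\Key$ as the sample of size $n$ drawn without replacement from the population, with $Y^\PE$ the complement) — i.e.\ apply Serfling/Bernstein with sample size $n$, obtaining $\pr{d^\Key - d_\tot \geq s} \leq \exp(-n s^2/(2\sigma^2 + \tfrac43 (M/\delta) s))$, then use $d^\Key - d^\PE = \tfrac{N}{k}(d^\Key - d_\tot)$ to put $s = \nu k/N = \nu(k/N)$, which produces precisely the exponent $-\mu^2 n (k/N)^2/(2\sigma^2 + 4\mu/3\,(k/N)(M/\delta))$ in~\eqref{lem,eq:Bernstein} after renaming $\nu\to\mu$. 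Thus the scheme is: (i) set up the population $z_i$ and the conservation identity; (ii) apply the without-replacement Bernstein inequality to $d^\Key$ around $d_\tot$ with sample size $n$ and range $M/\delta$ and variance proxy $\sigma^2$; (iii) substitute the identity to translate the deviation threshold.

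The main obstacle is getting the constants and the direction of the inequality exactly right: one must be careful that the Bernstein bound for sampling \emph{without} replacement genuinely holds with the \emph{population} variance $\sigma^2$ in the denominator (so that a later step can use the crude bound $\sigma \leq \sigma_*$), rather than the sample variance, and that the finite-population correction factor $(k/N)$ or $(n/N)$ lands on the correct side. The cleanest route is to cite Serfling's result~\cite{Serfling74} — which reduces tail probabilities for sampling without replacement to those for independent sampling — and then apply a standard Bernstein inequality to the corresponding i.i.d.\ sum of bounded variables with mean $d_\tot$ and per-variable variance $\sigma^2$; since Serfling's reduction only improves the bound, the resulting exponent is at least as good as in the i.i.d.\ case, and the $(k/N)^2$ factor appears naturally from rescaling the deviation of $d^\Key$ via $d^\Key - d^\PE = (N/k)(d^\Key - d_\tot)$. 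A secondary subtlety is that the lemma is stated for the deviation $d^\Key \geq d^\PE + \nu$ in terms of the \emph{random} quantity $d^\PE$, not the fixed $d_\tot$; this is handled automatically by the conservation identity, which makes the event $\{d^\Key \geq d^\PE + \nu\}$ identical to $\{d^\Key \geq d_\tot + \nu k/N\}$, a deviation-from-the-true-mean event to which the concentration bound applies directly.
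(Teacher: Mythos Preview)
Your proposal is correct and, after the self-correction midway through, follows exactly the paper's route: apply Bernstein's inequality for sampling without replacement (the paper cites Hoeffding~\cite{Hoeffding1963} rather than Serfling, but the content is the same) to $d^\Key$ viewed as the mean of a size-$n$ sample, then use the conservation identity $nd^\Key + kd^\PE = Nd^\tot$ to rewrite $\{d^\Key \geq d^\PE + \nu\}$ as $\{d^\Key \geq d_\tot + \nu k/N\}$ and substitute $\tilde\mu = \nu k/N$ together with $|\cX| = 2M/\delta$. The only cosmetic difference is the reference for the without-replacement Bernstein bound.
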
 
\begin{proof}
The bound follows directly from Bernstein's inequality
\begin{equation}
\pr{d^\Key \geq d^\tot + \tilde\mu } \leq  \exp\big(- \frac{n \tilde\mu^2}{2\sigma^2 + 2\mu\vert \cX\vert/3} \big) \, ,
\end{equation}
which, as shown by Hoeffding~\cite{Hoeffding1963}, also holds for sampling without replacement. Using that $nd^\Key + kd^\PE= Nd^\tot$ and that $|\cX|=2M/\delta$, a straigthforward calculation results in~\eqref{lem,eq:Bernstein}. 
\end{proof}

We are now ready to prove Theorem~\ref{thm:KeyLength}. For that we observe that from~\eqref{HmaxBound} follows that 
\begin{equation}
H_{\max}^\epsilon(Y_B^\Key|Y^\Key_A)_\rho \leq \gamma(d_0+\mu) \, ,
\end{equation}
if $\mu$ is such that~\eqref{eq:Prob1} is smaller than $\epsilon^2$. Hence, we use Lemma~\ref{lem:LargeDeviation1} and Lemma~\ref{lem:LargeDeviation2} to bound~\eqref{eq:Prob1} and set the expression equal to $\epsilon^2$, where $\epsilon\leq (\epsilon_1-\epsilon_s)/(2p_\pass) - 2 \tilde \epsilon$ with $\tilde \epsilon$ as in \eqref{eq:tildeEps} (c.f.~\eqref{eq:KeyLength2}). Solving the equation for $\mu$ and using $p_\pass \leq 1$, we obtain an upper bound on $\mu$ by~\eqref{eq:mu}. This concludes the security proof. 


\section{Performance and Limitations of Security Proofs based on the Extended Uncertainty Principle} \label{sec:Tightness}

In Section~\ref{sec:UR}, we have seen that the main ingredient in the security proof is the uncertainty relation with quantum memory for smooth min- and max-entropy (c.f.~\eqref{eq:URsmooth})
\begin{equation} \label{eq:URsmoothEnt}
H_{\min}^{\epsilon}(Q_B^{\delta,n}|E)_{\rho} +  H_{\max}^{\epsilon}(P_B^{\delta,n}|A^n)_{\rho}  \geq  - n \log c(\delta) \, .
\end{equation} 
Here, we denote by $Q_B^{\delta,n}$ and $P_B^{\delta,n}$ the classical random variable induced by an arbitrary amplitude and phase measurement with discretization into intervals of equal length $\delta$.
Thus, the tightness of the bound on the optimal key rate~\eqref{eq:KeyLength} crucially depends on how tight the uncertainty relation is for the state given in the protocol. Since we are interested in optimality in the following, and as such in the question of how much key can be extracted under normal working condition, we can assume that Eve is absent for the moment. Then, the state is in good approximation given by the $n$-fold tensor product of identical Gaussian states described by a covariance matrix depending on coupling and channel losses as well as excess noise as described in~\eqref{eq:LossModel}.

But even though we can assume that the state takes this simple form it is still very hard to compute the corresponding smooth min- and max-entropy directly. We circumvent this problem by using a further approximation. In particular, we can use the asymptotic equipartition property in infinite dimensions~\cite{Furrer10}, saying that the smooth min-entropy $\frac 1n H_{\min}^{\epsilon}(Q_B^{\delta,n}|E)_{\rho^{\otimes n}}$ can be approximated up to a correction $\cO (\frac1{\sqrt{n}})$ by the von Neumann entropy $ H(Q_B^{\delta}|E)_{\rho}$. Here,  $\rho_{Q_B^{\delta} E}$ is given by measuring the amplitude with a spacing $\delta$ on a single copy. The same applies for the smooth max-entropy such that $\frac 1n H_{\max}^{\epsilon}(P_B^{\delta,n}|A)_{\rho^{\otimes n}}$ can be approximated by $ H(P_B^{\delta}|A)_{\rho}$. 

Furthermore, if we choose $\delta$ small enough we can approximate the von Neumann entropy of the discrete distribution over intervals of length $\delta$ by the differential von Neumann entropy~\cite{Berta13} 
\begin{equation}\label{eq:DeltaApprox}
H(Q_B^{\delta}|E)_{\rho} \approx h(Q_B|E) -\log \delta \, ,
\end{equation}
where $h(Q_B|E)$ denotes the differential quantum conditional entropy of the continuous amplitude measurement. 
Similarly, we have that $ H(P_B^{\delta}|E)_{\rho}  \approx h(P_B|A) -\log \delta $. Using that $c(\delta)\approx \delta^2/(2\pi)$, we can thus conclude that in the asymptotic limit inequality~\eqref{eq:URsmoothEnt} is well approximated by 
\begin{equation} \label{eq:URvN}
h(Q_B|E) + h(P_B|A) \geq  \log 2\pi \, . 
\end{equation}

Hence, we can qualitatively investigate the tightness of~\eqref{eq:URsmoothEnt} by considering inequality~\eqref{eq:URvN}. For our situation, the latter one can now easily be analyzed as the differential quantum conditional entropy can be computed for Gaussian classical and quantum states. In the following, we always choose system $E$ as the Gaussian purification of the Gaussian state between $A$ and $B$. In~\cite{Berta13}, it was shown that~\eqref{eq:URvN} gets approximately tight for a two-mode squeezed state without losses and squeezing above $10$ dB. However, tightness holds only conditioned on Alice's quantum system but not after she performs the amplitude measurements. The data processing inequality only ensures that 
\begin{equation}
h(P_B|A) \leq h(P_B|P_A) \, , 
\end{equation}
but equality does not always hold, even for the optimal measurement on $A$. Unfortunately, in our case it turns out that the loss through the data processing inequality is substantial (see Figure~\ref{fig:UR}) such that the optimality of the bound has to be analyzed for the inequality after applying the data processing inequality 
\begin{equation} \label{eq:URvNdp}
h(Q_B|E) + h(P_B|P_A) \geq  \log 2\pi \, . 
\end{equation}
In Figure~\ref{fig:UR}, we plotted the tightness of~\eqref{eq:URvN} and~\eqref{eq:URvNdp} for the same parameters of the state for which the key rates are plotted in Section~\ref{sec:KeyRates}. We see that unfortunately, the gap between the left hand side and right hand side of~\eqref{eq:URvN} and~\eqref{eq:URvNdp} increases for high losses. We further note that an increase of the initial squeezing does hardly change the gap for losses above $30$\%.

\begin{figure}\begin{center}\includegraphics*[width=8.8cm]{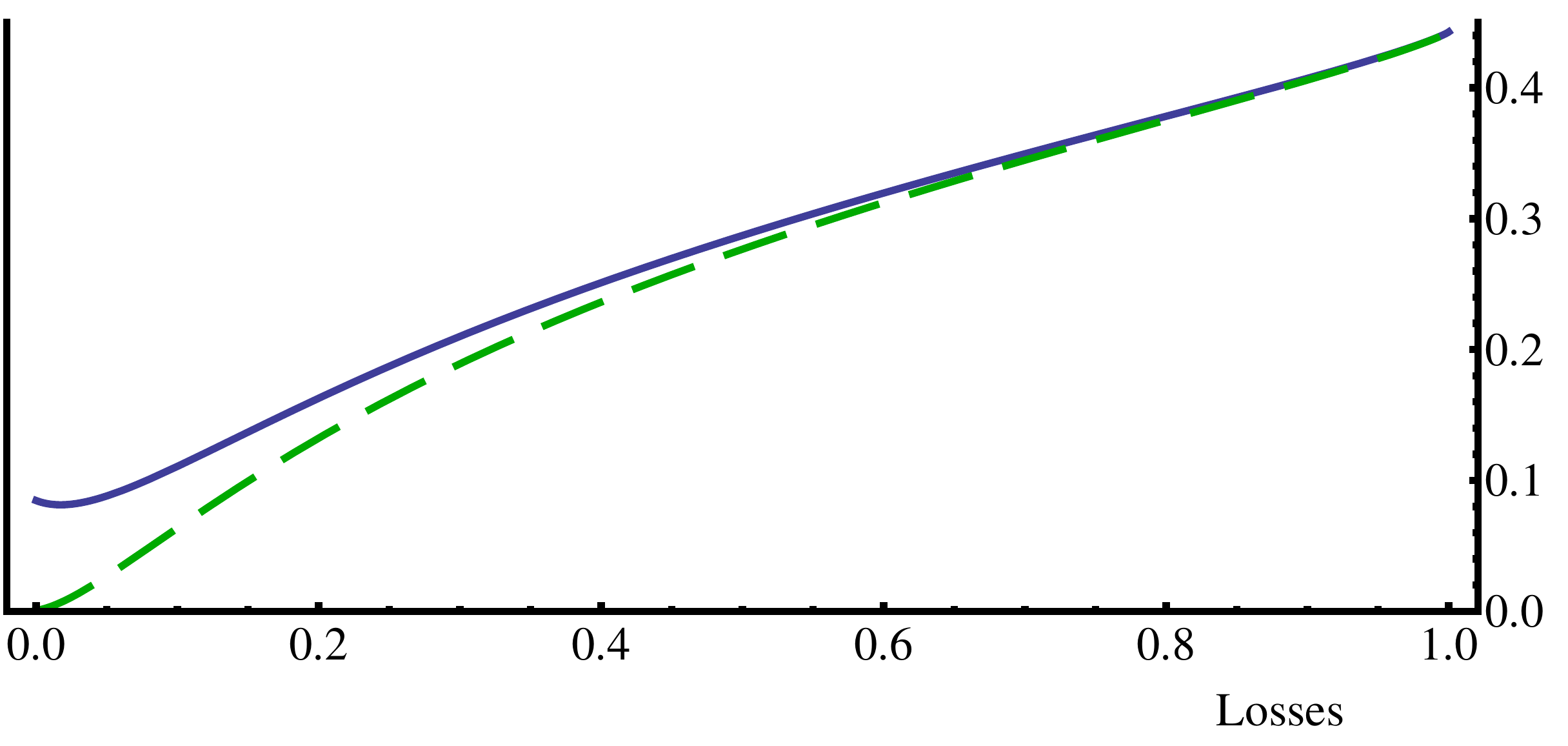}\caption{\label{fig:UR} 
The gap between l.h.s. and r.h.s. of the uncertainty relations in~\eqref{eq:URvNdp} (solid line) and~\eqref{eq:URvN} (dashed line) are plotted for a two-mode squeezed state with squeezing and antisqueezing of $11$dB and $16$dB against the losses on Bob's mode. The losses on Alice's mode and the excess noise are set to $0$. The gap for~\eqref{eq:URvNdp} (solid line) is the amount by which the bound on the key rate reduces in the asymptotic limit compared with the optimal key rate. 
}\end{center}\end{figure}

This gap severely limits the tolerated noises also causing that the finite-key rates presented in Section~\ref{sec:KeyRates} vanish for high losses. We can quantitatively analyze the effect of the untightness of the uncertainty relation on the key rate by calculating the asymptotic key rate~\eqref{eq:KeyLength}. In this regime all the statistical estimation errors disappear and collective attacks are as strong as coherent attacks~\cite{Renner_Cirac_09}. We find for the asymptotic key rate by using that $\ell_\IR = H(P_B^{\delta}|P_A^{\delta})$ for perfect error correction and~\eqref{eq:DeltaApprox} the simple formula 
\begin{equation}\label{eq:AsymKeyUR}
r_{\text{UR}} =  \log  2\pi - 2 h(P_B|P_A) \, .
\end{equation}
In contrast, the asymptotically optimal key rate given by the Devetak-Winter formula~\cite{DevetakW} is 
\begin{equation}\label{eq:AsymKey}
r_{\text{Opt}} = h(P_B|E) - h(P_B|P_A) \, , 
\end{equation}
where we also applied the approximation in~\eqref{eq:DeltaApprox}. In Figure~\ref{fig:Asym}, we compare the finite-key rate from~\eqref{thm,eq:KeyLengthCoherent} with the asymptotic key rates $r_{\text{UR}}$ and $r_{\text{Opt}}$ for the same parameters as in Figure~\ref{fig:EC95} except that the excess noise is set equal to $0$. We see that even the asymptotic key rate $r_{\text{UR}}$ vanishes for moderate losses of $66$\%. We remark that even if the squeezing is arbitrarily high and the losses in Alice's mode are $0$\% the maximally tolerated losses are not exceeding $75$\%. 

\begin{figure}
\begin{center}\includegraphics*[width=8.8cm]{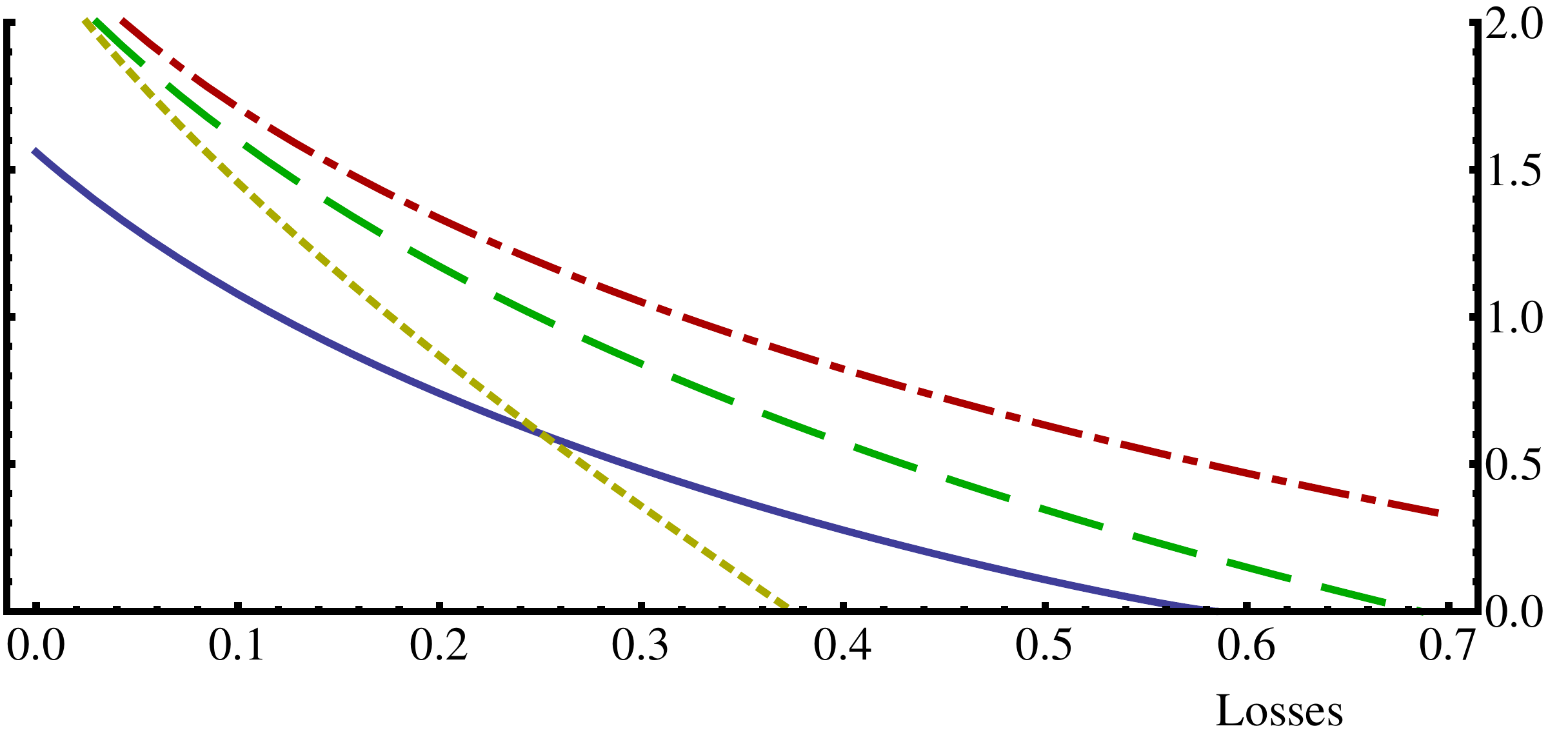}\caption{\label{fig:Asym} 
The loss dependence of the finite-key rate for $N_\tot=10^{11}$ (straight line) is compared with the asymptotic key rate $r_{\text{UR}}$ (dashed), the optimal key rate $r_{\text{Opt}}$ (dashed-dotted), and the asymptotic key rate for direct reconciliation $r_{\text{DR}}$ (dotted). The squeezing and antisqueezing is set to $11$dB and $16$dB and Alice's coupling losses as well as the excess noise to $0$. The reconciliation efficiency of the non-asymptotic key rate is $\beta=0.95$ and the other parameters are as in Figure~\ref{fig:EC95}. 
}\end{center}\end{figure}

In Figure~\ref{fig:Asym}, we also plotted the asymptotic key rate obtained via the extended uncertainty principle if using a direct reconciliation protocol. In this situation the asymptotic key rate is $r_{\text{DR}} =  \log  2\pi - 2 h(P_A|P_B)$. The plot shows that we have obtained a finite-key rate in the case of reverse reconciliation for losses much larger than what is ultimately tolerated in the case of direct reconciliation.

\section{Conclusion}\label{sec:Conclusion}

We have presented a security proof against coherent attacks including finite-size effects for a reverse reconciliation continuous variable QKD protocol. The protocol is based on the generation of two-mode squeezed states and homodyne detection. Security for transmission losses of up to $50$\% for experimental parameters demonstrated in~\cite{Eberle11} have been certified under realistic assumptions. A remaining challenging point in an implementation of the presented protocol will be the reconciliation protocol. However, recently some advances have been made in non-binary error correction codes such that reconciliation efficiencies above $90$\% seem realistic.   

We further investigated on the tightness of the security analysis based on the uncertainty relation with quantum memory and showed that even in the asymptotic limit the maximally tolerated losses are bounded. The reason for that is that the uncertainty relation is not perfectly tight and for high losses the trade-off between Eve's knowledge and the correlations between Bob and Alice gets very small. Hence, for the high loss regime a very tight bound on Eve's information is crucial.

 \emph{Acknowledgements.---} I gratefully acknowledge valuable discussions with Joerg Duhme, Vitus H\"andchen and Takanori Sugiyama. I'm especially grateful to Anthony Leverrier who provided very helpful comments on a first version and proposed using a test like in Figure~\ref{fig:Energy} to control the energy of Eve's attack. This work is supported by the Japan Society for the Promotion of Science (JSPS) by KAKENHI grant No. 24-02793.


\end{document}